%%%%%%%%%%%%%%%%%%%%%%%%%%%%%%%%%%%%%%%%%%%%%%%%%%%%%%%%%%%%%%%%%%%%
%
%      REG Dissectibility
%           * Use the command `latex2e'
%
%%%%%%%%%%%%%%%%%%%%%%%%%%%%%%%%%%%%%%%%%%%%%%%%%%%%%%%%%%%%%%%%%%%%
\documentclass[10pt,a4paper]{article}

\usepackage{amsmath,amsgen,latexsym}
\usepackage{amstext,amssymb,amsfonts,latexsym}
\usepackage{theorem}
\usepackage{pifont}
\usepackage[dvips]{graphics,epsfig}

\setlength{\evensidemargin}{0cm}
\setlength{\oddsidemargin}{0cm}
\setlength{\topmargin}{-0.5cm}
\setlength{\textheight}{24.0cm}
\setlength{\textwidth}{16.5cm}

\setlength{\headsep}{0cm}
\setlength{\headheight}{0cm}
\setlength{\marginparwidth}{0cm}

%%%%%%%%%%%%%%%%%%%%%%%%%%%%%%%%%%%%%%%%%%%%%%%%%%%%%%%%%%%%%%%%%%%%%%

%%%%%%%%%%%%%%%%%%%%%%%%%%%%%%%%%%%%%%%%%%%%%%%%%%%%%%%%%%%%%%%%%%%%
 % General abbreviations
%%%%%%%%%%%%%%%%%%%%%%%%%
 % Skip and noindent
 \newcommand{\bs}{\bigskip}
 \newcommand{\ms}{\medskip}
 \newcommand{\n}{\noindent}
 
 \newcommand{\hs}[1]{\hspace*{ #1 mm}}
 \newcommand{\vs}[1]{\vspace*{ #1 mm}}

% Abstract

% Fonts

 \newcommand{\setempty}{\mathrm{\O}}
 
 \newcommand{\nat}{\mathbb{N}}

%%%%%%%%%%%%%%%%%%%%%%%%%
% Functions

% Complementation
 \newcommand{\co}{\mathrm{co}\mbox{-}}

% Abbreviations

 \newcommand{\ie}{\textrm{i.e.},\hspace*{2mm}}
 \newcommand{\eg}{\textrm{e.g.},\hspace*{2mm}}

%%%%%%%%%%%%%%%%%%%%%%%%%%%%
% Class abbreviations

 \newcommand{\AAA}{{\cal A}}
 \newcommand{\BB}{{\cal B}}
 \newcommand{\CC}{{\cal C}}
 \newcommand{\FF}{{\cal F}}
 \newcommand{\DD}{{\cal D}}
 \newcommand{\EE}{{\cal E}}

 \newcommand{\GG}{{\cal G}}

 \newcommand{\PP}{{\cal P}}

%%%%%%%%%%%%%%%%%%%%%
%%%%%%%%%%%%%%%%%%%%
% Complexity classes

 \newcommand{\p}{\mathrm{P}}

% Other Complexity Classes

% Function classes

% Operators

 \newcommand{\reg}{\mathrm{REG}}
 \newcommand{\cfl}{\mathrm{CFL}}

 \newcommand{\onecequallin}{1\mbox{-}\mathrm{C}_{=}\mathrm{LIN}}
 \newcommand{\oneplin}{1\mbox{-}\mathrm{PLIN}}

%%%%%%%%%%%%%%%%%%%%%%%%%%%%%%%%%%%
% Delimiters

% \newcommand{\matrices}[4]{\left( \begin{array}{cc} #1 & #2 \\%
%      #3 & #4   \end{array}\right)}

% \newcommand{\comb}[2]{\left({\small \begin{array}{c} #1 \\%
%      #2 \end{array} }\right)}

%%%%%%%%%%%%%%%%%%%%%%%%%%%%%%%%%%

% Logic connectives, etc.

% \newcommand{\implies}{\longrightarrow}

%%%%%%%%%%%%%%%%%%%%%%%%%%%%%%%%%%%

% Theorems, etc.

\theoremstyle{plain}
\theoremheaderfont{\bfseries}
\setlength{\theorempreskipamount}{3mm}
\setlength{\theorempostskipamount}{3mm}

 \newtheorem{theorem}{Theorem}[section]
 \newtheorem{lemma}[theorem]{Lemma}
 \newtheorem{proposition}[theorem]{Proposition}
 \newtheorem{corollary}[theorem]{Corollary}

 {\theorembodyfont{\rmfamily}
  }
 {\theorembodyfont{\rmfamily} \newtheorem{example}[theorem]{Example}}
 {\theorembodyfont{\rmfamily} }

 \newtheorem{claim}{Claim}

 \newenvironment{proof}{\par \noindent
            {\bf Proof. \hs{2}}}{\hfill$\Box$ \vspace*{3mm}}

 \newenvironment{proofof}[1]{\vspace*{5mm} \par \noindent
         {\bf Proof of #1.\hs{2}}}{\hfill$\Box$ \vspace*{3mm}}

 \newenvironment{yproof}{\par \noindent
            {\bf Proof. \hs{2}}}{\hfill$\Box$ \vspace*{3mm}}

%%%%%%%%%%%%%%%%%%

%%%%%%%%%%%%%%%%%%%%%%%%%%%%%%%%%%%%%%%%%%%%%%%%%%%%%%%

\newcommand{\ignore}[1]{}

\newcommand{\track}[2]{[\:\begin{subarray}{c} #1 \\%
      #2 \end{subarray} ]}

\newcommand{\regdissect}{\mathrm{REG}\mbox{-}\mathrm{DISSECT}}
\newcommand{\semilin}{\mathrm{SEMILIN}}
\newcommand{\cgl}{\mathrm{CGL}}
\newcommand{\bcfl}{\mathrm{BCFL}}
\newcommand{\bcflbh}{\mathrm{BCFL}_{\mathrm{BH}}}

%%%%%%%%%%%%%%%%%%%%%%%%%%%%%%%%%%%

% End of general abbreviations

%%%%%%%%%%%%%%%%%%%%%%%%%%%%%%%%%%

%%%%%%%%%%%%%%%%%%%%%%%%%%%%%%%%%%%%%%%%%%%%%%%%%%%%%%%%%%%%%%%%%%
%
 \begin{document}
%%%%%%%%%%%%%%%%%%
%%%%%%%%%%%%%%%%%%
%%%%%%%%%%%%%%%%%%
%%%
\pagestyle{plain}
\setcounter{page}{1}

\begin{center}
{\Large {\bf The Dissecting Power of Regular Languages}} \bs\ms\\

{\sc Tomoyuki Yamakami\footnote{Present Affiliation: Department of Information Science, University of Fukui, 3-9-1 Bunkyo, Fukui 910-8507, Japan.} \hs{1} and \hs{1}  Yuichi Kato}$^{*}$ \bs\\
\end{center}

%%%%%%%%%%%%%%%%%
%%%%%%%%%%%%%%%%%

\begin{quote}
\n{\bf Abstract.} 
A recent study on structural properties of regular and context-free languages has greatly    promoted our basic understandings of the complex behaviors of those languages. We continue the study to examine how regular languages behave  when they need to cut numerous infinite languages. A particular interest rests on a situation in which a regular language needs to ``dissect''  a given infinite language into two subsets of infinite size.
Every context-free language is dissected by carefully  chosen regular languages (or  it is REG-dissectible). In a larger picture, we show that  constantly-growing languages and semi-linear languages are REG-dissectible.
Under certain natural conditions, complements and finite intersections of semi-linear languages also become REG-dissectible.
Restricted to bounded languages, the intersections of finitely many context-free languages and, more surprisingly, the entire Boolean hierarchy over bounded context-free languages are REG-dissectible.
As an immediate application of the REG-dissectibility, we show another structural property, in which an appropriate bounded context-free language can ``separate with infinite margins''  two given nested infinite bounded context-free languages.

\ms
\n{\bf keywords.} 
theory of computing,  formal languages,  regular language,  context-free language, 
bounded language,  semi-linear,  constantly growing,  dissectible,  i-separate
\end{quote}

%%%%%%%%%%%%%%%%%%%%%%%%%%%%%%%%%%
%%%%%%%%%%%%%%%%%%%%%%%%%%%%%%%%%%
\section{Background Knowledge and the Results' Overview}\label{sec:introduction}

The exquisitely complex behaviors of formal languages are often dictated by multiple-layers of inner structures of the languages and a mathematical theory over those languages has been developed in the past six decades alongside the discovery of some of the hidden structures.
In an early stage of the study of context-free languages, for instance,  a notion of
{\em semi-linearity}---a structural property on the frequency of occurrences of symbols---was found  in  \cite{Par61} and  a {\em pumping lemma}---another property regarding the growth rate of strings---was proven in \cite{BPS61}.
Similarly,  underlying structures of regular languages have been analyzed within a number of different frameworks, including the Myhill-Nerode theorem, monadic second-order logic, and finitely generated monoids.
Recently, new realms of structural properties of languages have been studied by obvious analogy with structural complexity issues of polynomial time-bounded complexity classes.  Such properties include {\em primeimmunity} as well as {\em pseudorandomness} against the regular and context-free languages, introduced in \cite{Yam11}, and a notion of {\em minimal cover}, which was applied to the regular languages in \cite{DSY02}. In the literature, numerous key questions concerning the behaviors of languages have been raised but left unsolved. We suspect that the difficulty in answering those questions may be rooted in yet-unknown structures that constitute the languages.

To promote our understandings of formal languages in general, it may be desirable to unearth  the hidden structural properties of the languages. In this line of study, this paper aims at exploring another structural property, which is seemingly innocent but possibly fundamental,  concerning the ability to partition a target infinite set into two portions of infinite size.
This simple property, which we name ``dissectibility,'' seems  more suitable for weak  computations, because, as shown in Section \ref{sec:how-to-dissect}, polynomial-time decidable languages, for instance, are powerful enough to dissect any recursive languages of infinite size. Among models of weak computations, we are focused on the regular languages, because they are generally regarded as weak in recognition power;
however, they could exhibit surprisingly high  power in dissecting infinite languages.
To be more precise at this point, an infinite set $C$ is said to {\em dissect} a target infinite set $L$, as illustrated in Fig.\ref{fig:dissection}, if two disjoint sets $C\cap L$ and $\overline{C}\cap L$ ($=L-C$) are both infinite, where $\overline{C}$ expresses  the {\em complement} of $C$. When $C$ is particularly a regular language, we succinctly say that  $L$ is {\em REG-dissectible}.
We are mostly interested in clarifying exactly what kind of languages are $\reg$-dissectible.
A typical example of $\reg$-dissectible language is the aforementioned context-free languages (Corollary \ref{CFL-REG-dissect}).
As for another example, let us consider a language $L_1$ generated by a grammar whose productions include a special form $S\rightarrow SS$, where $S$ is the start symbol. Irrelevant to its computational complexity, the language $L_1$ can be dissected by a regular language composed of strings of lengths that are equal to zero modulo $3$, because
 $L_1$ contains a series of strings of lengths $2k,3k,4k,\ldots$ for an appropriately chosen constant $k>0$. A more concrete example is  the language  $L_2=\{w^{n!}\mid w\in\{a,b\}^2,n\in\nat\}$.
Although this language $L_2$ is not even context-free, it can be easily dissected by a regular language consisting of  strings, each of which begins with the letter $a$. The third example   language is  $L_3 = \{(ab^n)^n \mid n\in\nat\}$, whose complement is context-free. This language $L_3$ can be easily dissected by a regular language whose strings contain an even number of $a$'s.
As a relevant notion, a $\CC$-pseudorandom language \cite{Yam11} also dissects any language in $\CC$ with quite large {\em margins}, where the intuitive term ``margin'' refers to the difference between two given sets.

%%%
%%%
\begin{figure}[b]
\begin{minipage}{0.5\hsize}
 \begin{center}
 \includegraphics[width=3.5cm]{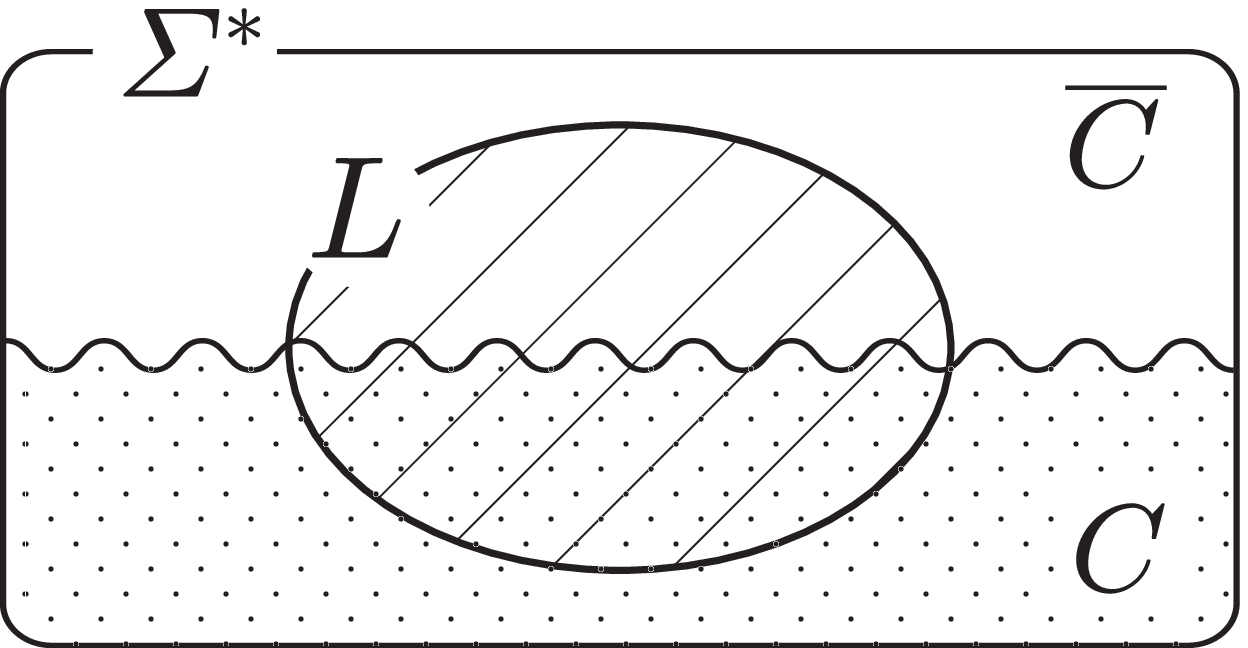}
 \end{center}
\caption{$C$ dissects $L$.}\label{fig:dissection}
\end{minipage}
\begin{minipage}{0.5\hsize}
 \begin{center}
 \includegraphics[width=3.5cm]{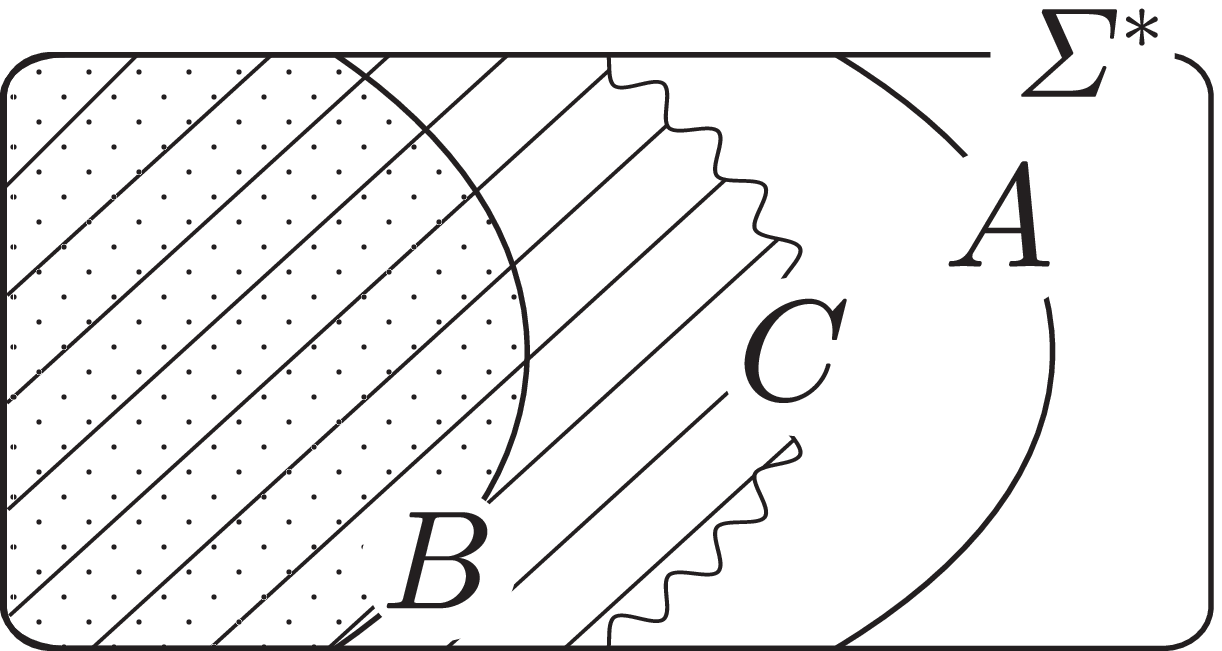}
 \end{center}
\caption{$C$ i-separates $\mathrm{i}(B,A)$.}\label{fig:separation}
\end{minipage}
\end{figure}
%%%
%%%

Through Sections \ref{sec:how-to-dissect} to \ref{sec:context-free}, two wider families of languages, {\em constantly-growing languages} and {\em semi-linear languages}, will be  shown to be $\reg$-dissectible.  Under certain natural conditions, the complements, the intersections, and the differences of semi-linear languages are proven to be $\reg$-dissectible using a
simple analysis of length patterns of strings inside a given language.
This analysis involves a manipulation of solutions of semi-linear equations and those conditions are indeed necessary to guarantee the $\reg$-dissectibility.
On the contrary, a rather obvious limitation exists for the $\reg$-dissectibility; namely, as  shown in Section \ref{sec:how-to-dissect},
there is a logarithmic-space computable language that cannot be $\reg$-dissectible (Theorem \ref{impossible-L}).
Taking a step further forward, when limited to {\em bounded languages} of Ginsburg and Spanier \cite{GS64}, we will be able to show that
the intersections of finitely many context-free languages are dissected by appropriate regular languages, despite the fact that the intersections of $k$ bounded context-free languages for $k\geq1$ form an infinite hierarchy within the family of context-sensitive languages \cite{LW73}.
By elaborating our argument further, we will prove that the entire {\em Boolean hierarchy} over the class of bounded context-free languages is also $\reg$-dissectible (Theorem \ref{Boolen-dissect}). These results will be presented in Section \ref{sec:context-free}.

The $\reg$-dissectibility notion has intimate connections to other notions.
Earlier, Domaratzki, Shallit, and Yu \cite{DSY02} studied a notion of minimal cover, which means the ``smallest'' superset $A$ of a given set $B$, where  ``smallest'' means that there is no set between $A$ and $B$ with infinite margins.
Motivated by their notion and results, we pay a special attention to a structural property of separating two infinite ``nested'' languages with infinite margins.
In our term of ``separation with infinite margins'' (or {\em i-separation}, in short), we actually mean, as illustrated in Fig.\ref{fig:separation}, that a pair of infinite sets $A$ and $B$, denoted by $\mathrm{i}(B,A)$, for which $A$ covers $B$ with an infinite margin,
can be separated by an appropriate set $C$ that lies in between the two sets with infinite margins.
As an immediate application of the aforementioned $\reg$-dissectibility results for the  bounded context-free languages, we will show in Section \ref{sec:application}
that two bounded context-free languages can be i-separated by bounded context-free languages.
This i-separation result will be further extended into any level of the Boolean hierarchy over bounded context-free languages (Theorem \ref{BCFL-separate}).

{}From the next section, we will formally introduce the key notions of the $\reg$-dissectibility and the i-separation and we will present detailed proofs of our major results mentioned above.

%%%%%%%%%%%%%%%%%%%%%%%%%%%%%%%%%%%%
%%%%%%%%%%%%%%%%%%%%%%%%%%%%%%%%%%%%
\section{Notions and Notations}

We briefly explain a set of basic notions and notations that will be used in the subsequent sections.
First, we denote by $\nat$ the set of all {\em natural numbers} (\ie nonnegative integers) and we write  $\nat^{+}$  for  $\nat-\{0\}$.  For each number $n\in\nat^{+}$, the notation $[n]$ denotes the {\em integer interval}  $\{1,2,3,\ldots,n\}$.
Associated with three arbitrary numbers $a,b,k\in\nat$, we define  $A_{a,b,k}$ to be the set $\{an+b\mid n\in\nat, n\geq k\}$. The generic notation $O$ denotes both an all-zero vector  and  an all-zero matrix of appropriate dimension.
For two sets $A$ and $B$, the set $\{x \mid x\in A, x\not\in B\}$ is the {\em difference} between $A$ and $B$ and is expressed as $A-B$.
When $A$ is a {\em countable} set, the succinct notation $|A|=\infty$ (resp., $|A|<\infty$) indicates that $A$ is an infinite (resp., a finite) set. Given two countable sets $A$ and $B$, we write $A\subseteq_{ae}B$ to mean $|A-B|<\infty$, and  the  notation $A=_{ae}B$ is used whenever  both $A\subseteq_{ae}B$ and $B\subseteq_{ae}A$ hold, where the subscript ``ae'' stands for ``almost everywhere.''

An {\em alphabet} $\Sigma$ is a finite nonempty set of ``symbols'' and a {\em string} over $\Sigma$ is a finite sequence of symbols in $\Sigma$.  The set of all strings over $\Sigma$ is denoted $\Sigma^*$, and $\Sigma^+$ expresses the set $\Sigma^*-\{\lambda\}$, where $\lambda$ is the {\em empty string}. The {\em length} $|x|$ of any string $x$ is the total number of occurrences of symbols in $x$. For any string $x$ and any symbol $\sigma$, the notation $\#_{\sigma}(x)$ stands for
the number of occurrences of $\sigma$ in $x$.
Given a language $S$, the {\em length set} of $S$, denoted $LT(S)$, is the collection of all lengths $|x|$ for any strings $x$ in $S$. We often identify a language $S$ with its {\em characteristic function}, which is also denoted $S$ (\ie $S(x)=1$ if $x\in S$, and $S(x)=0$ otherwise).
The sets of all regular languages and of all context-free languages are expressed respectively as  $\reg$ and $\cfl$.

The {\em complement} of a language $B$ over alphabet $\Sigma$ is the set $\Sigma^*-B$ and it is denoted $\overline{B}$ as far as its underlying alphabet $\Sigma$ is clear from the context.
For ease of our notations, we use the following four class operations:  (1) $\CC\wedge \DD =\{C\cap D\mid C\in\CC,D\in\DD\}$, (2) $\CC\vee \DD =\{C\cup D\mid C\in\CC,D\in\DD\}$, (3)  $\CC-\DD =\{C-D\mid C\in\CC,D\in\DD\}$, and (4) $\co\CC = \{\overline{C}\mid C\in\CC\}$, where $\CC$ and $\DD$ are language families.
Given any family $\FF$ of languages, a language $S$ is said to be {\em $\FF$-immune} if $S$ is infinite and $S$ has no infinite subset belonging to $\FF$ (see, \eg \cite{Yam11}).

%%%%%%%%%%
%%%%%%%%%%
\section{How to Dissect Languages}\label{sec:how-to-dissect}

Let us recall from Section \ref{sec:introduction} that an infinite language $S$ is {\em REG-dissectible} exactly when there exists a regular language $C$ that dissects $S$ (\ie $|C\cap S|= |\overline{C}\cap S|=\infty$). Moreover, a nonempty language family $\FF$ is  {\em REG-dissectible} if and only if every infinite language in $\FF$ is $\reg$-dissectible. Notice that, since  this definition disregards
all {\em finite} languages inside $\FF$,  we implicitly assume that $\FF$ contains infinite languages. We can naturally expand the $\reg$-dissectibility to a more general notion of $\CC$-dissectibility simply by replacing $\reg$ with an arbitrary nonempty language family $\CC$; however,  the choice of $\reg$ is actually of great importance.
In fact,  it is more interesting to consider low-complexity language families like $\reg$ as a candidate for $\CC$. One reason is that polynomial-time decidable  languages, for instance,  are powerful enough to dissect any infinite recursive  languages.

\begin{example}
We claim that every infinite recursive language is $\p$-dissectible, where $\p$ is the family of all polynomial-time decidable languages.
Let $L$ be any infinite language over alphabet $\Sigma$ recognized by a two-way single-tape deterministic Turing machine $M$ that eventually halts on all inputs. For simplicity, let $\Sigma=\{0,1\}$ and assume that $L\neq_{ae}\Sigma^*$ because, otherwise, a regular set $C=\{0x \mid x\in\Sigma^*\}$ easily dissects $L$.
Now, we define $C$ as follows. Let $z_0,z_1,z_2,\ldots$ be a standard lexicographic order of all strings over $\Sigma$. Given each string $x$,  to determine the value $C(x)$, we  go through
the following procedure $\PP$ from round $0$ to round $|x|$. Initially, we set $A=R=\setempty$. At round $i$, we  first  compute the value $C(z_i)$ by calling $\PP$ recursively round by round.
We then simulate $M$ on the input $z_i$ within $|x|$ steps. When  $M(z_i)=1$, we update $A$ to  $A\cup\{i\}$ if $C(z_i)=1$, and $R$ to $R\cup\{i\}$ if $C(z_i)=0$.  On the contrary, when  either $M(z_i)=0$ or $M(z_i)$ is not obtained within $|x|$ steps, we
do nothing. After round $|x|$, if $|A|>|R|$, then define $C(x) = 0$;  otherwise, define $C(x)=1$. Clearly, $C$ is in $\p$. By a diagonalization argument, we can show that $|C\cap L|=|\overline{C}\cap L|=\infty$.  Therefore, every infinite recursive language can be dissected by an appropriate language in $\p$.
\end{example}

In the following second example, we will show that a simple use of {\em advice} makes it possible to dissect arbitrary languages by appropriate regular languages. For basic properties of the advice, the reader may refer to \cite{TYL10,Yam10,Yam11}.

\begin{example}
We claim that every infinite language is $\reg/n$-dissectible, where $\reg/n$ is the collection of {\em advised regular languages}, each of which is of the form $\{x \mid M \text{ accepts }\track{x}{h(|x|)}\}$ for an appropriate {\em deterministic finite automaton} (or dfa), an advice alphabet $\Gamma$, and an advice function $h:\nat\rightarrow\Gamma^*$ satisfying $|h(n)|=n$ for all $n\in\nat$, where $\track{x}{y}$ is a {\em track notation} used in  \cite{TYL10}.
To verify this claim, take any infinite language $L$ over alphabet $\Sigma$. Since $L$ is infinite, the length set $LT(L)$ is also infinite. Hence, we partition $LT(L)$ into two infinite subsets, say, $S_1$ and $S_2$; that is, $S_1\cap S_2=\setempty$,  $LT(L) = S_1\cup S_2$, and $|S_1|=|S_2|=\infty$. Without loss of generality, we assume that $0\not\in S_1$. Now, let us define an advice function $h:\nat\rightarrow\{0,1\}^*$ as  $h(n) = 10^{n-1}$ if $n\in S_1$ and $h(n)=0^{n}$ otherwise. We also define a dfa $M$ that behaves as follows: on input $\track{x}{y}$, if $y= 10^{|x|-1}$ with $|x|\geq1$, then $M$ accepts the input; otherwise, it rejects the input. The language $C=\{x\mid \text{$M$ accepts $\track{x}{h(|x|)}$}\}$ then belongs to $\reg/n$. Obviously, for any string $x\in L$ with $|x|\in S_1$, since $h(|x|)=10^{|x|-1}$, $M$ accepts $\track{x}{h(|x|)}$.  It  thus holds that $|C\cap L| = \infty$. Similarly, for any $x\in S$ with $|x|\in S_2$, $M$ rejects $\track{x}{h(|x|)}$, implying $|\overline{C}\cap L|=\infty$. In conclusion, $C$ dissects $L$.
\end{example}

As noted in Section \ref{sec:introduction}, a pattern of the lengths of strings in a target language surely plays a key role in  proving its  $\reg$-dissectibility. This fact turns our attention to languages composed of strings satisfying a certain length condition, known as a ``constant growth property.''  Formally, a nonempty language $L$ is said to be {\em constantly growing} if there exist a constant $p>0$ and a finite subset $K\subseteq\nat^{+}$ that meet the following condition: for every string $x$ in $L$ with $|x|\geq p$, there exist  a string $y\in L$ and a constant $c\in K$ for which $|x|=|y|+c$ holds. Such languages can be easily dissected by appropriately chosen regular languages as shown in the next lemma.

\begin{lemma}\label{const-growth-dissect}
Every infinite constantly-growing language is $\reg$-dissectible.
\end{lemma}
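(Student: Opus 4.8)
The plan is to exploit the fact that constant growth forces the length set $LT(L)$ to have \emph{bounded gaps}, and then to dissect $L$ by a purely length-based regular language, namely one of the form $C=\{x\in\Sigma^*\mid |x|\equiv r \pmod{q}\}$. Such a $C$ is recognized by a $q$-state deterministic finite automaton that merely counts the input length modulo $q$, so $C\in\reg$, and its length set is exactly $A_{q,r,0}$. Since $\Sigma$ is finite, each length is carried by only finitely many strings, so for a length-based $C$ we have $|C\cap L|=\infty$ precisely when infinitely many elements of $LT(L)$ are congruent to $r$ modulo $q$, and symmetrically for $\overline{C}$. Thus the whole problem reduces to finding a modulus $q$ and a residue $r$ such that $r$ and at least one other residue are each attained infinitely often by $LT(L)$ modulo $q$.

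First I would record two preliminary facts. Because $L$ is infinite over a finite alphabet, $LT(L)$ is an infinite subset of $\nat$. Next, translating the constant-growth condition from strings to lengths, for every $n\in LT(L)$ with $n\geq p$ there are $c\in K$ with $n-c\in LT(L)$; writing $d=\max K$, this shows that every such $n$ has an element of $LT(L)$ lying in the interval $[\,n-d,\,n-1\,]$. Consequently any two consecutive elements of $LT(L)$ whose larger member is at least $p$ differ by at most $d$; that is, $LT(L)$ is eventually syndetic with gap bound $d$.

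Now fix a modulus $q$ with $q>d$, say $q=d+1$, and let $R\subseteq\{0,1,\ldots,q-1\}$ collect the residues attained infinitely often by $LT(L)$ modulo $q$. By the pigeonhole principle $R\neq\emptyset$. The crux is to prove $|R|\geq 2$. Suppose instead $R=\{r^*\}$ is a singleton. Then all but finitely many elements of $LT(L)$ are congruent to $r^*$ modulo $q$, so beyond some threshold every two consecutive elements of $LT(L)$ differ by a positive multiple of $q$, hence by at least $q>d$. This contradicts the gap bound $d$ established above, since $LT(L)$ is infinite and therefore contains arbitrarily large consecutive pairs. Hence $|R|\geq 2$.

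Finally I would pick any $r\in R$ and set $C=\{x\in\Sigma^*\mid |x|\equiv r\pmod{q}\}\in\reg$. Because $r\in R$, residue $r$ is hit infinitely often, giving $|C\cap L|=\infty$; because $|R|\geq 2$, some other residue $r'\in R$ is also hit infinitely often and its strings fall in $\overline{C}$, giving $|\overline{C}\cap L|=\infty$. Thus $C$ dissects $L$, so $L$ is $\reg$-dissectible. I expect the main obstacle to be the singleton case in the $|R|\geq 2$ step: everything hinges on converting ``only one surviving residue'' back into ``gaps at least $q$,'' which is exactly where the constant-growth hypothesis, through the gap bound, is indispensable; the remaining arguments about length-based regular languages are routine bookkeeping.
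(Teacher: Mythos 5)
Your proof is correct and takes essentially the same route as the paper's: both dissect $L$ with a length-congruence regular language modulo $\max K+1$ and use the constant-growth property to show that at least two residue classes of $LT(L)$ are hit infinitely often. Your syndetic gap-bound phrasing of the contradiction is just a repackaging of the paper's argument, which instead shows directly that an infinite residue class forces another class (shifted by some $j\in K$) to be infinite.
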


\begin{yproof}
Let $L$ be any infinite language over alphabet $\Sigma$ and assume that $L$ is constantly growing with a constant $p>0$ and a finite set  $K\subseteq\nat^{+}$. Now, let $c$ denote the maximal element in $K$ and set $c'=c+1$. For each index $i\in[c]$, we take a special language $L_i=\{x\in L\mid |x|\equiv i\;(\mathrm{mod}\;c')\}$, and we wish to prove that at least two distinct indices $i_1,i_2\in[c]$ satisfy that $|L_{i_1}|=|L_{i_2}|=\infty$. Toward a contradiction, we assume otherwise.   Since $L = \bigcup_{i\in[c]}L_i$,
exactly one index $i\in[c]$ must make $L_i$ infinite.  Let us fix such an index, say, $i$. Given any index $j\in[c]$, we  set  $S_{i,j}$ to be $\{y\in L\mid \exists x\in L_i\,[\,|x|=|y|+j\,]\}$. Since $L$ is  constantly growing, a set $S_{i,j}$ must be infinite for a certain index $j$.  Note that $S_{i,j}\subseteq L_{\ell}$ holds for $\ell = i-j\;\mathrm{mod}\;c'$.
This containment implies that $L_{\ell}$ is infinite, contradicting the uniqueness of $i$  since $i\neq\ell$. Therefore, we can choose two distinct indices $i_1,i_2\in[c]$ for which  $|L_{i_1}|=|L_{i_2}|=\infty$.
Finally, we define $C=\{x\in\Sigma^*\mid |x|\equiv i_1\;(\mathrm{mod}\;c')\}$, which is clearly  regular.  Since $L_{i_1}\subseteq C$ and $L_{i_2}\subseteq \overline{C}$, it obviously follows that
$|C\cap L|=|\overline{C}\cap L|=\infty$. In other words, $C$ dissects $L$, as requested.
\end{yproof}

For a wider application of Lemma \ref{const-growth-dissect}, it is desirable to strengthen the lemma slightly. In what follows, we succinctly write $\cgl$ for the family of all constantly-growing languages and use the notion of  {\em CGL-immunity} to describe our proposition.

\begin{proposition}\label{CG-immune-dissect}
Every language that is not $\cgl$-immune is $\reg$-dissectible.
\end{proposition}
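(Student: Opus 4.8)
The plan is to reduce Proposition \ref{CG-immune-dissect} to Lemma \ref{const-growth-dissect} by exploiting the definition of $\cgl$-immunity directly. Recall that a language $S$ is $\cgl$-immune if $S$ is infinite and no infinite subset of $S$ belongs to $\cgl$. Thus, if $L$ is \emph{not} $\cgl$-immune, then either $L$ is finite---in which case the proposition is vacuous, since $\reg$-dissectibility concerns only infinite languages---or $L$ is infinite but possesses an infinite subset $L'\subseteq L$ with $L'\in\cgl$. The whole idea is to dissect $L$ by first dissecting the well-behaved subset $L'$ and then arguing that this dissection survives when we pass back to $L$.

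First I would invoke Lemma \ref{const-growth-dissect} on the infinite constantly-growing language $L'$, obtaining a regular language $C$ such that $|C\cap L'|=|\overline{C}\cap L'|=\infty$. The key observation is that $\reg$-dissectibility is \emph{upward closed along supersets on the same alphabet}: since $L'\subseteq L$, we immediately have $C\cap L'\subseteq C\cap L$ and $\overline{C}\cap L'\subseteq \overline{C}\cap L$, so that
\[
|C\cap L|\;\geq\;|C\cap L'|\;=\;\infty
\qquad\text{and}\qquad
|\overline{C}\cap L|\;\geq\;|\overline{C}\cap L'|\;=\;\infty.
\]
Hence the very same regular language $C$ that dissects $L'$ also dissects $L$, which establishes the claim.

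The one point that requires care---and which I expect to be the only real obstacle---is the matter of alphabets. Lemma \ref{const-growth-dissect} produces a regular $C$ over the alphabet of $L'$, whereas we need $C$ to live over the alphabet $\Sigma$ of $L$. If $L'$ is taken as a subset of $L\subseteq\Sigma^*$, then $L'$ is already a language over $\Sigma$, and inspecting the proof of Lemma \ref{const-growth-dissect} shows that the witnessing $C$ is of the form $\{x\in\Sigma^*\mid |x|\equiv i_1\;(\mathrm{mod}\;c')\}$, which is manifestly a regular language over $\Sigma$; so the inclusions $C\cap L'\subseteq C\cap L$ are literal set inclusions in $\Sigma^*$ and the argument goes through without adjustment. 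I would make this explicit to ensure that the complement $\overline{C}$ is taken relative to $\Sigma^*$ in both places, keeping the dissection of $L'$ and that of $L$ consistent.

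In summary, the proof is a short monotonicity argument: dissectibility of an infinite sublanguage transfers to any superset over the same alphabet, and the hypothesis of non-$\cgl$-immunity guarantees exactly such a constantly-growing infinite sublanguage to which Lemma \ref{const-growth-dissect} applies. No new combinatorial analysis of length patterns is needed beyond what that lemma already supplies.
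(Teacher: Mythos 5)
Your proof is correct and follows essentially the same route as the paper: extract an infinite constantly-growing subset $L'\subseteq L$ from the failure of $\cgl$-immunity, apply Lemma \ref{const-growth-dissect} to $L'$, and transfer the dissection to $L$ by the monotonicity of $\reg$-dissectibility under supersets (the paper's ``transitive closure property''). Your explicit verification of the monotonicity step and the alphabet bookkeeping are sound additions, not deviations.
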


The above proposition comes from Lemma \ref{const-growth-dissect} as well as the following {\em transitive closure property} of $\reg$-dissectibility:
for any two infinite languages $A$ and $B$, if $A$ is $\reg$-dissectible and $A\subseteq B$, then $B$ is also $\reg$-dissectible.

Luckily, a length pattern of strings in  a language is not the only feature used to dissect the target language. For example,  the languages $L_2$ and $L_3$ exemplified in Section \ref{sec:introduction} are not constantly growing; however, they are  dissected by regular languages. Before presenting more examples of $\reg$-dissectible languages in the next section,
we will show a plausible limitation of the dissecting power of the regular languages.
Following a standard convention, the notation $\mathrm{L}$ stands for the family of all languages that can be recognized by two-way deterministic Turing machines using a read-only input tape together with a constant number of logarithmic space-bounded read/write work tapes.
In the next proposition, we will show that $\mathrm{L}$ contains a language that cannot be  dissected by any regular languages.

\begin{theorem}\label{impossible-L}
The language family $\mathrm{L}$ is not $\reg$-dissectible.
\end{theorem}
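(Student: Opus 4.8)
The plan is to exhibit a single infinite language $L\in\mathrm{L}$ that no regular language can dissect, built by a straightforward (injury-free) diagonalization against all regular languages and then verified to be logspace-computable. Fix an effective enumeration $C_0,C_1,C_2,\ldots$ of all regular languages over $\Sigma=\{0,1\}$, presented by deterministic finite automata, and impose for each $k$ the requirement $R_k$ that $C_k$ not dissect $L$, \ie that $|C_k\cap L|<\infty$ or $|\overline{C_k}\cap L|<\infty$. I would construct $L$ together with a shrinking chain of \emph{infinite} regular ``allowed regions'' $\Sigma^*=A_0\supseteq A_1\supseteq A_2\supseteq\cdots$, adding exactly one string $x_k$ to $L$ at each stage $k$ so that $L=\{x_k\mid k\in\nat\}$ is infinite, with $|x_0|<|x_1|<\cdots$.

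At stage $k$ the sets $A_k\cap C_k$ and $A_k\cap\overline{C_k}$ are regular and their union is the infinite set $A_k$, so at least one of them is infinite. If exactly one is infinite I let $A_{k+1}$ be that infinite side; if both are infinite I set $A_{k+1}=A_k\cap C_k$. In every case $A_{k+1}$ is an infinite regular set, so the invariant is preserved, and I then choose $x_k\in A_{k+1}$. The point is that every string added at a stage $j\geq k$ lies in $A_{j+1}\subseteq A_{k+1}$, and $A_{k+1}$ is contained entirely in $C_k$ or entirely in $\overline{C_k}$; hence from stage $k$ onward $L$ is confined to one side of $C_k$, so whichever of $C_k\cap L$, $\overline{C_k}\cap L$ sits on the excluded side is contained in the finite set $\{x_0,\ldots,x_{k-1}\}$. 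Thus each $R_k$ is met and no regular language dissects $L$.

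The step I expect to be the real obstacle is placing $L$ in $\mathrm{L}$, since a priori the automaton for $A_k$ is a product of $k$ automata whose state count explodes with $k$. The remedy is to let the length gaps grow fast enough to absorb this cost: when choosing $x_k$ I take it long enough that the \emph{entire} stage-by-stage reconstruction of $A_0,\ldots,A_{k+1}$ and of the choices made fits in space $O(\log|x_k|)$. Each stage needs only a finiteness test on an explicitly given regular language (a reachable-and-co-reachable-cycle test on the product automaton) and the selection of a witness string; since the infinite region $A_{k+1}$ contains arbitrarily long strings, a sufficiently long $x_k$ is always available. A logspace machine deciding $x\in L$ then recomputes the construction only up to the first stage whose string has length $\geq|x|$; by the gap condition this uses $O(\log|x|)$ work space, and it accepts iff $x$ is exactly the string recorded there. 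This yields $L\in\mathrm{L}$, completing the argument.
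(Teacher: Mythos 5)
Your diagonalization itself is sound: the nested infinite regular regions $A_k$ guarantee that from stage $k$ onward $L$ stays on one side of $C_k$, so no regular language dissects $L$. The genuine gap is in the claim that the gap condition puts $L$ into $\mathrm{L}$. Growing gaps can only absorb costs that depend on the stage index $k$ alone (\eg the finiteness tests on the product automaton, whose size is a function of $k$; these need Savitch-type search anyway, since directed reachability is not known to be in deterministic logspace, but that is harmless for exactly this reason). They cannot absorb the cost of the step ``accepts iff $x$ is exactly the string recorded there,'' because that cost grows with $|x|$ itself, and at the critical stage $|x_k|$ equals $|x|$. Concretely, $x_k$ is far too long to be ``recorded'' in $O(\log|x|)$ space, so the machine must verify $x=x_k$ from a canonical definition of $x_k$. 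If $x_k$ is, say, the length-lexicographically first string of $A_{k+1}$ above the length threshold, this verification amounts to exact-length reachability questions in the product automaton with length parameter up to $n=|x|$ (``is there a path of length exactly $r$ from this state to an accepting state?''), a problem in $\mathrm{NL}$ that is not known to be in $\mathrm{L}$; the natural deterministic implementation uses $\Theta(\log^2 n)$ space. Making $|x_k|$ longer makes this worse, not better: you cannot satisfy ``cost of stage $k$ is at most $c\log|x_k|$'' when that cost itself grows like $\log^2|x_k|$. So as written the membership algorithm runs in $O(\log^2 n)$ space, and $L\in\mathrm{L}$ is not established.

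The gap is fixable, but it needs an idea the proposal does not contain: the witness must have a logspace-verifiable shape whose description size depends only on $k$. For instance, since $A_{k+1}$ is infinite, its automaton has a reachable, co-reachable cycle, so $uv^*w \subseteq A_{k+1}$ for strings $u,v,w$ of length bounded by the (stage-dependent) automaton size; taking $x_k = uv^{m}w$ for the least $m$ clearing the length threshold, equality with the input is checkable by direct simulation plus an $O(\log n)$-bit counter, and now every stage-dependent cost really is absorbed by the gaps. (Alternatively one can invoke eventual periodicity of length sets of regular languages.) Contrast this with the paper, which avoids the whole apparatus by exhibiting one explicit witness: $S=\{0^{n!}\mid n\in\nat\}$. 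Membership in $\mathrm{L}$ is a simple divisibility test, and no regular $C$ dissects $S$ because the length set of a unary regular language is a finite union of arithmetic progressions $A_{a,b,k}$ with $b<a$, and any progression containing infinitely many factorials forces $b=0$, hence contains all sufficiently large factorials; so $C$ either meets $S$ finitely often or contains almost all of $S$. That argument is a few elementary lines, whereas yours, even once repaired, requires delicate space accounting.
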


\begin{yproof}
Let us consider the unary language $S=\{0^{n!}\mid n\in\nat\}$ over the alphabet $\Sigma=\{0\}$.
Firstly, we will show that $S$ is in $\mathrm{L}$. For this purpose, it suffices to design a logarithmic-space deterministic Turing machine that recognizes $S$. On input of the form $0^m$, the desired machine $M$ writes $m$ in binary on its 1st work tape using $O(\log{m})$ cells and $1$ on its 2nd work tape. At each round, $M$ reads out a number, say, $n$ in binary written on the 2nd tape and checks if $m$ is a multiple of $n$ using the 3rd work tape as a counter up to $n$. If not, then $M$  immediately rejects the input; otherwise, it  increases $n$ by one (in binary) before entering the next round. If the machine does not reject until $n$ reaches $m$, then it accepts the input.

Secondly,  we want to show that no regular language can dissect $S$. Assume otherwise; that is, there exists an infinite language $C\in\reg$  over $\Sigma$ that dissects $S$.
We need the following technical property (Claim \ref{REG-character}) of this unary regular language $C$ regarding its length set $LT(C)$. Let us recall the notation $A_{a,b,k}$ and, in addition, set $\GG=\{(a,b,k)\mid a,b,k\in\nat, b<a\}$ for the description of the property.

\begin{claim}\label{REG-character}
For any unary language $C$, $C$ is regular iff there exists a finite set $G\subseteq \GG$ for which $LT(C) = \bigcup_{(a,b,k)\in G} A_{a,b,k}$.
\end{claim}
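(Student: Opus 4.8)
The plan is to prove both directions of this standard characterization of unary regular languages, which is essentially a restatement of the fact that the length set of a unary regular language is eventually periodic. The key object is the set $\GG=\{(a,b,k)\mid a,b,k\in\nat, b<a\}$ and the arithmetic progressions $A_{a,b,k}=\{an+b\mid n\in\nat, n\geq k\}$, which are exactly the ``tails'' of residue classes modulo $a$.

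For the easier direction ($\IF$), suppose $LT(C)=\bigcup_{(a,b,k)\in G}A_{a,b,k}$ for some finite $G\subseteq\GG$. Each single $A_{a,b,k}$ is trivially the length set of a unary regular language, since $\{0^{an+b}\mid n\geq k\}$ is recognized by a small dfa (it checks that the length is at least $ak+b$ and congruent to $b$ modulo $a$). A finite union of such sets is again regular because $\reg$ is closed under finite union, and for unary languages a language is determined up to equality by its length set. So $C$ itself, whose length set is this finite union, coincides with the unary language having that length set, and is therefore regular.

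For the harder and more essential direction ($\THEN$), I would invoke the pumping lemma for regular languages specialized to the unary alphabet. Let $M$ be a dfa with $s$ states recognizing $C$ over $\{0\}$. The standard argument shows that the length set $LT(C)$ is \emph{eventually periodic}: there exist a threshold $n_0$ and a period $q\leq s$ such that for all $m\geq n_0$, we have $0^m\in C$ iff $0^{m+q}\in C$. This follows by tracking the single run of $M$ on $0^m$: the sequence of states visited must enter a cycle of length at most $s$, so membership stabilizes into a periodic pattern. Having established eventual periodicity, I would set $a=q$ and, for each residue $b\in\{0,1,\ldots,a-1\}$ that occurs in $LT(C)$ beyond the threshold, include the triple $(a,b,k)$ with $k$ chosen large enough that $ak+b\geq n_0$; finally I would absorb the finitely many ``short'' lengths below the threshold into the union by enlarging the appropriate progressions or by noting that any single length $m$ is itself $A_{m+1,m,0}$ with $m<m+1$. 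This yields a finite $G\subseteq\GG$ with $LT(C)=\bigcup_{(a,b,k)\in G}A_{a,b,k}$.

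The main obstacle I anticipate is purely bookkeeping rather than conceptual: one must handle the finitely many lengths below the periodicity threshold $n_0$ cleanly, and ensure every triple genuinely lies in $\GG$, \ie satisfies the constraint $b<a$. The residues from the periodic part automatically satisfy $b<a=q$, but representing an isolated short length $m$ as an arithmetic progression requires a careful choice such as $A_{m+1,m,0}=\{m\}$, whose parameters satisfy $m<m+1$. A tidy alternative is to take a single large modulus $a$ (for instance the least common multiple of $q$ with a value exceeding $n_0$) and list only the finitely many residues $b<a$ for which $0^b, 0^{a+b}, 0^{2a+b},\ldots$ eventually lie in $C$, choosing the offset $k$ per residue; this avoids mixing moduli and makes the finiteness of $G$ transparent. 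Either way the substance is the eventual periodicity supplied by the pumping lemma, and the rest is a matter of packaging the residue classes and their thresholds into triples of the prescribed form.
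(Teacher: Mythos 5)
Your ``if'' direction is sound, and your appeal to eventual periodicity of a unary dfa's run for the ``only if'' direction is the right core idea; for comparison, the paper offers no proof at all---it simply attributes the claim to Parikh---so yours is the only actual argument on the table. But the step you dismiss as ``purely bookkeeping'' is a genuine gap, and in fact it is fatal. By the paper's definition, $A_{a,b,k}=\{an+b\mid n\in\nat,\ n\geq k\}$, and every triple in $\GG$ has $b<a$, hence $a\geq 1$; so every $A_{a,b,k}$ is an \emph{infinite} arithmetic progression. In particular $A_{m+1,m,0}$ is not $\{m\}$ but $\{m,\,2m+1,\,3m+2,\ldots\}$, so your device for absorbing an isolated short length introduces spurious elements. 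Your ``tidy alternative'' (one large modulus $a$, a multiple of the period $q$ exceeding the threshold $n_0$) fails at the same point: the progression through an isolated length $m$ whose residue class modulo $q$ eventually leaves $LT(C)$ again contains numbers that are not lengths of strings in $C$.

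Moreover, no repackaging can close this gap, because the claim is false as literally stated. Take $C=\{0\}\cup\{0^{2n}\mid n\geq 1\}$, which is regular and infinite, with $LT(C)=\{1\}\cup\{2,4,6,\ldots\}$. If $1\in A_{a,b,k}$ for some $(a,b,k)\in\GG$, then $1+ja\in A_{a,b,k}$ for every $j\geq 0$; if $a$ is even these numbers are all odd, and if $a$ is odd then $1+2a$ is odd, so either way $A_{a,b,k}$ contains an odd number greater than $1$, which is not in $LT(C)$. Hence $LT(C)$ is not a finite union of sets $A_{a,b,k}$. The correct statement---and the one your pumping argument actually proves---is that a unary $C$ is regular iff $LT(C)=F\cup\bigcup_{(a,b,k)\in G}A_{a,b,k}$ for some \emph{finite} set $F\subseteq\nat$ and finite $G\subseteq\GG$: the periodic tails give $G$ exactly as you construct it, and the finitely many exceptional lengths below the threshold go into $F$. (This corrected form still supports the paper's use of the claim in the proof of Theorem \ref{impossible-L}, since a finite exceptional set cannot account for infinitely many factorials.) So the fix is to amend the statement and put the short lengths into $F$; as written, your proof and the claim itself break at exactly the same spot.
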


Claim \ref{REG-character} is attributed to Parikh \cite{Par61} and, since $C\in\reg$, the claim  guarantees the existence of a finite set $G$  that characterizes $C$; namely, $LT(C) = \bigcup_{(a,b,k)\in G}A_{a,b,k}$.

Since $|C\cap S|=\infty$, there exists a triplet $(a,b,k)$ in $G$ satisfying $|\{m\mid \exists\, n\geq k\,[\,m! = an+b\,]\}|=\infty$. Now, we argue that $b=0$.  First, take two integers $m,n$ with $n\geq k$ and $m>a$ satisfying $an+b = m!$. Since $a<m$,  $m!\equiv 0\;(\mathrm{mod}\;a)$ holds.  {}From  $an+b\equiv b\;(\mathrm{mod}\;a)$, we obtain $b\equiv 0\;(\mathrm{mod}\;a)$. Since $b<a$, $b$ must be zero, as requested.
Moreover, it holds that  $a>1$. To see this fact, suppose that $a=1$. Since $A_{1,0,k}$ equals
$\{n \mid n\geq k\}$,  we conclude that  $|\nat - A_{1,0,k}| <\infty$. Therefore, it follows that $|LT(\overline{C})\cap LT(S)|<\infty$, contradicting  $|\overline{C}\cap S|=\infty$.

Since $a>1$ and $b=0$, for a certain large constant $k'$, it holds that
$\{m!\mid m\geq k'\}\subseteq A_{a,0,k}$. This implies that $|LT(\overline{C})\cap LT(S)|<\infty$. This is a clear contradiction, and therefore $C$ cannot dissect $S$.
\end{yproof}

For convenience, we denote by $\regdissect$ the collection of all {\em infinite}  $\reg$-dissectible languages. It is not difficult to prove the following closure/non-closure properties.
(1) The set $\regdissect$ is closed under concatenation, reversal, Kleene star, and union. (2)
$\regdissect$ is not closed under intersection with regular languages.
(3) Moreover, $\regdissect$ is not closed under $\lambda$-free homomorphism as well as under quotient with regular languages, where $\lambda$ is the empty string. The last two properties can be proven using certain languages derived from the one presented in the proof of Theorem  \ref{impossible-L}.

%%%%%%%%%%%%%%%%%%%%%%%%%
%%%%%%%%%%%%%%%%%%%%%%%%%
\section{Context-Free Languages and Bounded Languages}\label{sec:context-free}

Parikh \cite{Par61} discovered that the number of times that each symbol  occurs  in each string of a given context-free language $L$ must satisfy a certain system of linear  Diophantine equations.
This result inspired a notion of {\em semi-linear languages}.
Context-free languages are an important example of semi-linear languages and a semi-linear nature of languages will be exploited in certain cases of the $\reg$-dissectibility proofs of the languages.  First, we will explain the notion of semi-linear sets and languages using a {\em matrix formalism}.
A subset $A$ of $\nat^{k}$ is called {\em linear} if
there exist a number $m\in\nat$ and an $(m+1)\times k$ nonnegative integer matrix (called a {\em critical matrix}) $T$ satisfying the following condition: for every point $v\in\nat^{k}$, $v$ is in $A$ if and only if  $(1,z_1,z_2,\ldots,z_m)T = v$
holds for a certain tuple (called a {\em solution})   $(z_1,z_2,\ldots,z_m)\in\nat^{m}$.
A {\em semi-linear set} is a union of finitely many linear sets.
Given any string $x$ over alphabet $\Sigma=\{\sigma_1,\sigma_2,\ldots,\sigma_k\}$,  a {\em Parikh image} of $x$,  denoted by  $\Psi(x)$, is  a point $(\#_{\sigma_1}(x),\#_{\sigma_2}(x),\ldots,\#_{\sigma_k}(x))$ in the space $\nat^{k}$,
and the  {\em commutative image} (or the {\em Parikh image})  $\Psi(L)$ of a language $L$ over $\Sigma$ refers to the set $\{\Psi(x)\mid x\in L\}$.
A language $L$ is called {\em semi-linear} whenever $\Psi(L)$ is semi-linear.

The family of all semi-linear languages is denoted by $\semilin$, and $\semilin(2)$ expresses the family   $\semilin\wedge \semilin$.

\begin{lemma}\label{SEMILIN-dissect}
$\semilin \subseteq \regdissect$ but $\semilin(2)\nsubseteq \regdissect$.
\end{lemma}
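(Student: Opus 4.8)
The statement splits into a positive inclusion and a separation, which I would attack by two quite different routes.

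For $\semilin\subseteq\regdissect$, the plan is to reduce to Lemma \ref{const-growth-dissect} by proving that every infinite semi-linear language is in fact constantly growing. Fix an infinite $L\subseteq\Sigma^*$ with $\Psi(L)=\bigcup_{s=1}^{N}A^{(s)}$, each $A^{(s)}$ linear with critical matrix $T^{(s)}$ whose row $0$ is a base point $b^{(s)}$ and whose remaining rows are period vectors $p^{(s)}_1,\dots,p^{(s)}_{m_s}$. Passing from a point $v=(1,z_1,\dots,z_{m_s})T^{(s)}$ to its coordinate sum $\ell(v)=\sum_j v_j$, one gets $\ell(v)=\ell(b^{(s)})+\sum_i z_i\,\ell(p^{(s)}_i)$, so a string's length is controlled entirely by the periods it uses. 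I would set $K=\{\ell(p^{(s)}_i): \ell(p^{(s)}_i)>0\}$ (a finite subset of $\nat^{+}$) and $p=1+\max_s \ell(b^{(s)})$. For any $x\in L$ with $|x|\ge p$, its Parikh image lies in some $A^{(s)}$ with $|x|>\ell(b^{(s)})$, so some used period $p^{(s)}_{i_0}$ has positive length; decreasing that coordinate by one yields a point still in $A^{(s)}\subseteq\Psi(L)$, hence a string $y\in L$ with $|x|=|y|+c$ and $c=\ell(p^{(s)}_{i_0})\in K$. Thus $L$ is constantly growing and Lemma \ref{const-growth-dissect} finishes this half, the finite languages being irrelevant by convention.

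For $\semilin(2)\nsubseteq\regdissect$ I would exhibit two semi-linear (indeed context-free) languages whose intersection is infinite but not $\reg$-dissectible. The structural guide is that, by the first half, any non-$\reg$-dissectible $L_1\cap L_2$ must fail to be constantly growing, and in fact $\Psi(L_1\cap L_2)$ must be non-semi-linear; this is only possible because $\Psi(L_1\cap L_2)$ can be a proper subset of the semi-linear set $\Psi(L_1)\cap\Psi(L_2)$, i.e. the complexity has to be manufactured by the non-commutative interaction of the two languages rather than residing in either factor. Guided by Theorem \ref{impossible-L}, where $\{0^{n!}\}$ defeats dissection precisely because $n!\equiv 0\pmod{c'}$ for all large $n$, I would aim for an intersection whose length set is a ``highly divisible'' sequence such as the factorials. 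Concretely, I would encode, over a small alphabet with block markers, a sequence of blocks whose lengths obey a multiplicative recurrence forcing the $j$-th block to have length $j!$, and arrange $L_1$ to check the matching constraints on one family of adjacent block pairs and $L_2$ on the complementary family, so that $L_1\cap L_2$ consists exactly of the legal factorial-shaped strings.

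The main obstacle is twofold. First, realizing the factorial recurrence by two context-free constraints is delicate: plain pushdown matching enforces only additive or constant-offset relations between adjacent blocks, whereas $b_{j+1}=(j+1)b_j$ has an unbounded, position-dependent multiplier, so the encoding needs an auxiliary counting layer whose correctness is split between the two languages. Second, and more seriously, non-$\reg$-dissectibility now has to be proved against every regular language over a genuine multi-letter alphabet, not merely against the unary length-based dissectors handled in Theorem \ref{impossible-L}; I would need to argue that, on the strings of $L_1\cap L_2$, any fixed finite automaton's acceptance is eventually determined by residues modulo its period count, and that the rigidity of the factorial encoding forces those residues, and hence acceptance, to stabilize to a single value on all but finitely many strings of $L_1\cap L_2$, so that no regular language can carve out two infinite pieces. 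Establishing this modular-rigidity statement in the multi-letter setting, rather than constructing the languages themselves, is where I expect the real work to lie.
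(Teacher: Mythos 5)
Your first half is correct and takes the same route as the paper: the paper also derives $\semilin\subseteq\regdissect$ by observing that every infinite semi-linear language is constantly growing and then invoking Lemma~\ref{const-growth-dissect}. Your period-decrementing argument merely fills in details the paper states without proof, and it is sound (any nonzero period vector in $\nat^k$ has positive coordinate sum, so the constant $c$ you extract always exists when $L$ is infinite).

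The second half, however, has a genuine gap. First, you never actually construct the witnesses: the block-encoding is only a plan, and both obstacles you yourself identify (enforcing a factorial recurrence via two pushdown-checkable constraints, and proving non-dissectibility against arbitrary regular languages over a multi-letter alphabet) are left unresolved, so this half is not a proof. Second, the plan is misdirected by a self-imposed requirement that both factors be \emph{context-free}. Nothing in the statement demands this, and it is exactly what creates your first obstacle; worse, exhibiting two context-free languages with an infinite non-$\reg$-dissectible intersection would settle, negatively, the $\reg$-dissectibility of $\cfl(2)$, which the paper explicitly lists as an open problem, so you should expect that route to be out of reach. The idea you are missing is that semi-linearity constrains only the Parikh image, so a semi-linear language may be wildly non-context-free as a set of strings. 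The paper exploits this: take $L_1=\{0^n1^n\mid n\in\nat\}$ and $L_2=\{1^n0^n\mid n\in\nat\}\cup\{0^{n!}1^{n!}\mid n\in\nat\}$. Both have Parikh image $\{(n,n)\mid n\in\nat\}$, hence both are semi-linear, and $L_1\cap L_2=\{0^{n!}1^{n!}\mid n\in\nat\}$. Non-$\reg$-dissectibility of this intersection follows by the argument of Theorem~\ref{impossible-L}, and your worry about multi-letter alphabets is mild here because the strings are so rigid: for any dfa, membership of $0^a1^a$ is eventually periodic in the single parameter $a$, and factorials are eventually $0$ modulo any fixed period, so acceptance stabilizes on all but finitely many strings of the intersection. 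In short, the ``interaction'' you were trying to engineer with two context-free checkers is obtained for free by hiding the factorial language inside $L_2$ behind the same-Parikh-image companion $\{1^n0^n\mid n\in\nat\}$ and letting the intersection with $L_1$ strip that companion away.
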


\begin{proof}
Every semi-linear language $L$ is defined by a finite set of certain linear equations  and this fact proves that $L$ has the property of constant growth.
Lemma \ref{const-growth-dissect} therefore leads to the first part of the lemma.
To see that $\semilin(2)$ is not $\reg$-dissectible, let us consider two  example languages    $L_1=\{0^n1^n\mid n\in\nat\}$ and $L_2=\{1^n0^n\mid n\in\nat\}\cup \{0^{n!}1^{n!}\mid n\in\nat\}$ over the binary alphabet $\Sigma=\{0,1\}$. Since  $\Psi(L_1)=\Psi(L_2)=\{(n,n)\mid n\in\nat\}$, $L_1$ and $L_2$ are semi-linear. However, the intersection $L_1\cap L_2\in\semilin(2)$, which equals $\{0^{n!}1^{n!}\mid n\in\nat\}$,  can be shown to be non-$\reg$-dissectible by an argument similar to the proof of Theorem  \ref{impossible-L}.
\end{proof}

Since $\cfl\subseteq\semilin$ \cite{Par61}, Lemma \ref{SEMILIN-dissect} immediately yields the following consequence.

\begin{corollary}\label{CFL-REG-dissect}
The language family $\cfl$ is $\reg$-dissectible.
\end{corollary}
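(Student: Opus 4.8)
The plan is to derive Corollary \ref{CFL-REG-dissect} as an immediate consequence of Lemma \ref{SEMILIN-dissect}, using only the classical inclusion $\cfl \subseteq \semilin$. The key observation is that the corollary follows from just the first half of the lemma, namely $\semilin \subseteq \regdissect$; the counterexample showing $\semilin(2) \nsubseteq \regdissect$ plays no role here.

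First I would invoke Parikh's theorem \cite{Par61}, which guarantees that the Parikh image $\Psi(L)$ of any context-free language $L$ is a semi-linear subset of $\nat^k$. By the definition of semi-linearity given in the text, this means every context-free language is itself a semi-linear language, so $\cfl \subseteq \semilin$. Next I would chain this with the first inclusion of Lemma \ref{SEMILIN-dissect}, obtaining $\cfl \subseteq \semilin \subseteq \regdissect$. Since $\regdissect$ is the family of all infinite $\reg$-dissectible languages and the definition of a $\reg$-dissectible language family disregards finite members, the containment $\cfl \subseteq \regdissect$ says precisely that every infinite context-free language is $\reg$-dissectible, which is exactly the assertion that $\cfl$ is $\reg$-dissectible.

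There is essentially no obstacle to overcome, as the entire argument is a transitivity of set inclusions once the two ingredients are in place. The only point requiring a moment of care is confirming that the notion of $\reg$-dissectibility of a \emph{family}, as defined in Section \ref{sec:how-to-dissect}, matches the statement $\cfl \subseteq \regdissect$ under the convention that finite languages are ignored; this is a definitional matching rather than a substantive step. Thus the proof is a one-line deduction, and I would simply write that $\cfl \subseteq \semilin$ by Parikh's theorem, so the first inclusion of Lemma \ref{SEMILIN-dissect} yields $\cfl \subseteq \regdissect$ at once.
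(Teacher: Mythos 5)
Your proof is correct and is exactly the paper's argument: the paper derives the corollary in one line from Parikh's theorem ($\cfl\subseteq\semilin$) combined with the inclusion $\semilin\subseteq\regdissect$ of Lemma \ref{SEMILIN-dissect}. Your additional remark about matching the family-level definition (ignoring finite languages) is a fair point of care but does not change the substance.
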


To utilize well-studied properties on semi-linear languages, we limit our attention within
a restricted part of context-free languages. A language $L$ over alphabet $\Sigma$ is said to be {\em bounded} if there are fixed nonempty
strings $w_1,w_2,\ldots,w_m$ in $\Sigma^*$ such that $L$ is a subset of  $L[w_1,w_2,\ldots,w_m]=_{\mathrm{def}}\{w_1^{i_1}w_2^{i_2}\cdots w_m^{i_m}\mid i_1,i_2,\ldots,i_m\in\nat\}$ \cite{GS64}.
For readability, we abbreviate as $\bcfl$ the family of all bounded context-free languages.
The {\em $k$-conjunctive closure} of $\bcfl$, denoted $\bcfl(k)$,  is
defined inductively as follows: $\bcfl(1)=\bcfl$ and $\bcfl(k)=\bcfl(k-1)\wedge \bcfl$ for every index $k\geq2$.
Earlier, Liu and Weiner \cite{LW73} proved that the collection $\{\bcfl(k)\mid k\in\nat^{+}\}$ forms an infinite hierarchy within the family  of context-sensitive languages.

\begin{theorem}\label{BCFL(k)-dissect}
For any index $k\geq1$, $\bcfl(k)$ is $\reg$-dissectible.
\end{theorem}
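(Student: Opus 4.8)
The plan is to prove the sharper containment $\bcfl(k)\subseteq\semilin$ and then simply invoke Lemma \ref{SEMILIN-dissect}, which already tells us that every infinite semi-linear language is $\reg$-dissectible. The conceptual point is exactly the one highlighted by Lemma \ref{SEMILIN-dissect}: although $\semilin(2)\nsubseteq\regdissect$ in general, the \emph{boundedness} hypothesis forces a $k$-fold intersection of context-free languages to collapse back into $\semilin$, so that no pathological intersection such as $\{0^{n!}1^{n!}\mid n\in\nat\}$ can live inside $\bcfl(k)$. In particular this argument is uniform in $k$ and needs no induction on the inductive definition of $\bcfl(k)$.

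To carry this out, fix an infinite language $L=L_1\cap L_2\cap\cdots\cap L_k$ with each $L_j\in\bcfl$, and first move all components into one common coordinate system. Since $L_1$ is bounded, there are fixed nonempty words $w_1,\ldots,w_m$ with $L\subseteq L_1\subseteq L[w_1,\ldots,w_m]$, and the bounding set $L[w_1,\ldots,w_m]=w_1^*\cdots w_m^*$ is regular. Replacing each $L_j$ by $L_j'=L_j\cap L[w_1,\ldots,w_m]$ leaves the intersection unchanged (because $L\subseteq L[w_1,\ldots,w_m]$), yet each $L_j'$ is now a bounded context-free language over the \emph{same} words $w_1,\ldots,w_m$, being the intersection of a context-free language with a regular one.

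The key structural input, going back to Ginsburg and Spanier \cite{GS64} (ultimately resting on Parikh \cite{Par61}), is that the exponent set $E_j=\{(i_1,\ldots,i_m)\in\nat^m\mid w_1^{i_1}\cdots w_m^{i_m}\in L_j'\}$ of each bounded context-free language $L_j'$ is semi-linear. Because semi-linear subsets of $\nat^m$ are closed under finite intersection, the set $E=\bigcap_{j=1}^k E_j=\{(i_1,\ldots,i_m)\mid w_1^{i_1}\cdots w_m^{i_m}\in L\}$ is again semi-linear. Finally I would push $E$ forward along the linear map $(i_1,\ldots,i_m)\mapsto\sum_{\ell=1}^m i_\ell\Psi(w_\ell)$, whose image is exactly the commutative image $\Psi(L)$; since semi-linear sets are preserved under linear images, $\Psi(L)$ is semi-linear, i.e. $L\in\semilin$. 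Lemma \ref{SEMILIN-dissect} then yields that $L$ is $\reg$-dissectible, which is the assertion of the theorem.

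The main obstacle is the structural input itself: establishing that the exponent set of a bounded context-free language is semi-linear, and doing so in a way that is insensitive to the non-uniqueness of the factorization $w_1^{i_1}\cdots w_m^{i_m}$, since two distinct exponent tuples may spell the same string. I would handle this by defining each $E_j$ as the full preimage $\{(i_1,\ldots,i_m)\mid w_1^{i_1}\cdots w_m^{i_m}\in L_j'\}$ rather than as a set of distinguished representatives; with this reading the identities $E=\bigcap_{j}E_j$ and $\Psi(L)=\{\sum_{\ell}i_\ell\Psi(w_\ell)\mid (i_1,\ldots,i_m)\in E\}$ hold verbatim, so no standard-form normalization of the bounding words is required. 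The remaining ingredients---closure of semi-linear sets under intersection and under linear images---are routine Presburger-style facts that I would state as auxiliary claims and verify directly from the critical-matrix formalism introduced above.
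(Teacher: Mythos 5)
Your proposal is correct and takes essentially the same approach as the paper: both prove that an infinite $L=\bigcap_{j=1}^{k}L_j\in\bcfl(k)$ lies in $\semilin$ by combining Ginsburg's theorem that the exponent set $\tilde{\Psi}(L_j)$ of a bounded context-free language is semi-linear with the observation that exponent sets commute with intersection (each exponent tuple spells a unique string), closure of semi-linear sets under intersection, and finally Lemma \ref{SEMILIN-dissect}. If anything, your write-up is more explicit on two points the paper glosses over: the normalization of all $L_j$ into a common bounding set $L[w_1,\ldots,w_m]$, and the passage from semi-linearity of the exponent set to semi-linearity of $\Psi(L)$ via the linear map $(i_1,\ldots,i_m)\mapsto\sum_{\ell}i_\ell\Psi(w_\ell)$, where the paper instead appeals informally to a ``similarity'' between $\Psi$ and $\tilde{\Psi}$.
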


For the proof of Theorem \ref{BCFL(k)-dissect}, we define $\tilde{\Psi}(w)$ to be  $\{(i_1,i_2,\ldots,i_m)\in\nat^{m}\mid w=w_1^{i_1}w_2^{i_2}\cdots w_m^{i_m}\}$ for each string $w$ in $L[w_1,w_2,\ldots,w_m]$.
Notice that $\tilde{\Psi}(w)$ could contain numerous elements
because $w$ may have more than one expression of the form $w^{i_1}_1w^{i_2}_2\cdots w^{i_m}_m$.
Finally, we define $\tilde{\Psi}(L)= \bigcup_{w\in L}\tilde{\Psi}(w)$ for any bounded language $L$.  This operator $\tilde{\Psi}$ works similarly as $\Psi$ does and, by exploiting this similarity, Ginsburg \cite{Gin66} exhibited a close relationship between a bounded context-free language $L$ and the semi-linearity of  $\tilde{\Psi}(L)$. What we need for our proof given below is the following slightly weaker form of \cite[Theorem 5.4.2]{Gin66}:
for any subset $L$ of $L[w_1,\ldots,w_k]$ in $\bcfl$,  $\tilde{\Psi}(L)$ is semi-linear, and thus $L$  belongs to $\semilin$.

\begin{proofof}{Theorem \ref{BCFL(k)-dissect}}
We start with the following general claim regarding $\Psi$.
By viewing $w_1,w_2,\ldots,w_{m}$ as ``different'' symbols $\sigma_1,\sigma_2,\ldots,\sigma_{m}$ as in \cite{Gin66},  a similarity between $\Psi(w)$ and $\tilde{\Psi}(w)$ makes  the claim  true for $\tilde{\Psi}$ as well.

\begin{claim}\label{semilin-two-intersection}
For any languages $L_1,L_2\in\semilin$, if $|L_1\cap L_2|=\infty$ and $\Psi(L_1)\cap \Psi(L_{2})\subseteq \Psi(L_1\cap L_2)$ hold, then $L_1\cap L_2$ is $\reg$-dissectible. More generally,
let $k$ be any number $\geq2$ and let $L_1,L_2,\ldots,L_k$ be $k$ semi-linear languages. If $\left|\bigcap_{i=1}^{k}L_i\right|=\infty$ and $\bigcap_{i=1}^{k}\Psi(L_i)\subseteq \Psi(\bigcap_{i=1}^{k}L_i)$ hold, then $\bigcap_{i=1}^{k}L_i$ is $\reg$-dissectible.
\end{claim}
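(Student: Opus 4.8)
The plan is to show that the extra hypothesis forces the Parikh image of the intersection to coincide exactly with the intersection of the individual Parikh images; once this collapse is established, $\bigcap_{i=1}^{k}L_i$ turns out to be a genuine \emph{semi-linear language}, and Lemma \ref{SEMILIN-dissect} finishes the job. I will prove the general case $k\geq2$ directly, since it subsumes the two-language statement.

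First I would set $L=\bigcap_{i=1}^{k}L_i$ and record the two inclusions separately. The inclusion $\Psi(L)\subseteq\bigcap_{i=1}^{k}\Psi(L_i)$ holds unconditionally: any $x\in L$ lies in every $L_i$, so $\Psi(x)\in\Psi(L_i)$ for all $i$. The reverse inclusion is precisely the hypothesis $\bigcap_{i=1}^{k}\Psi(L_i)\subseteq\Psi(L)$. Combining the two gives the equality $\Psi(L)=\bigcap_{i=1}^{k}\Psi(L_i)$, which is the crux of the argument.

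Next I would invoke the classical fact (Ginsburg and Spanier, see \cite{Gin66}) that the semi-linear subsets of $\nat^{m}$ are closed under finite intersection. Since each $\Psi(L_i)$ is semi-linear, their intersection is semi-linear, and by the equality just obtained $\Psi(L)$ is semi-linear as well; by definition this means $L$ is a semi-linear language. As $|L|=\infty$ by assumption, Lemma \ref{SEMILIN-dissect} (the inclusion $\semilin\subseteq\regdissect$) immediately yields that $L=\bigcap_{i=1}^{k}L_i$ is $\reg$-dissectible. The corresponding statement for the bounded operator $\tilde{\Psi}$, which is what Theorem \ref{BCFL(k)-dissect} actually consumes, follows verbatim once $w_1,\ldots,w_m$ are treated as distinct symbols exactly as in \cite{Gin66}.

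The only genuinely nontrivial ingredient here is the Boolean closure of semi-linear sets under intersection; the remaining steps are bookkeeping. It is worth stressing why the containment hypothesis cannot be removed: for the pair $L_1=\{0^n1^n\mid n\in\nat\}$ and $L_2=\{1^n0^n\mid n\in\nat\}\cup\{0^{n!}1^{n!}\mid n\in\nat\}$ from Lemma \ref{SEMILIN-dissect} one computes $\Psi(L_1)\cap\Psi(L_2)=\{(n,n)\mid n\in\nat\}\nsubseteq\{(n!,n!)\mid n\in\nat\}=\Psi(L_1\cap L_2)$, and indeed $L_1\cap L_2$ is not $\reg$-dissectible. Thus the condition $\bigcap_i\Psi(L_i)\subseteq\Psi(\bigcap_i L_i)$ is exactly what prevents the non-semi-linear collapse exhibited there, and it is the single place where the argument could fail if the hypothesis were weakened.
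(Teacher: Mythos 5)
Your proof is correct and takes essentially the same route as the paper's: both combine the trivial inclusion $\Psi\left(\bigcap_{i}L_i\right)\subseteq\bigcap_{i}\Psi(L_i)$ with the hypothesis to get equality, invoke the closure of semi-linear sets under intersection to conclude $\bigcap_{i}L_i\in\semilin$, and finish with Lemma \ref{SEMILIN-dissect}. The only cosmetic difference is that you argue the general case $k\geq2$ directly, whereas the paper writes out $k=2$ and remarks that the extension is routine.
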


\begin{proof}
Since $\Psi(L_1\cap L_2)\subseteq \Psi(L_1)\cap\Psi(L_2)$ always holds, our assumption actually  means  $\Psi(L_1\cap L_2) = \Psi(L_1)\cap\Psi(L_2)$. Since the set of all semi-linear sets is closed under Boolean operations (as well as projections) \cite{GS66}, we conclude that $L_1\cap L_2$ belongs to $\semilin$. Lemma \ref{SEMILIN-dissect} implies that $L_1\cap L_2\in\regdissect$.
The above proof can be easily extended to the case of the intersection $\bigcap_{i=1}^{k}\Psi(L_i)$ of $k$ commutative images.
\end{proof}

Now, let $L'=L[w_1,w_2,\ldots,w_m]$ and take any $k$ subsets  $L_1,L_2,\ldots,L_k\in\bcfl$ of $L'$. As noted earlier, it follows that $L_1,L_2,\ldots,L_k\in\semilin$. Here, we assume that $L=\bigcap_{i=1}^{k}L_i$ is an infinite set.
By Claim \ref{semilin-two-intersection},  we only need to prove that
$\bigcap_{i=1}^{k}\tilde{\Psi}(L_i) \subseteq \tilde{\Psi}(\bigcap_{i=1}^{k}L_i)$.  Firstly,  choose  any point $v\in \bigcap_{i=1}^{k}\tilde{\Psi}(L_i)$ and fix $i\in[k]$ arbitrarily.
Since the inverse image  $\tilde{\Psi}^{-1}(v)=\{w\in L' \mid v\in \tilde{\Psi}(w)\}$  must be a singleton, there exists a {\em unique} string $w\in L'$ for which $\tilde{\Psi}^{-1}(v)=\{w\}$. {}From $v\in \tilde{\Psi}(L_i)$, we obtain the membership $w\in L_i$. Moreover, since $i$ is arbitrary, we conclude that $w$ is in $\bigcap_{i=1}^{k} L_i$. It therefore follows that $v\in \tilde{\Psi}(w)\subseteq \tilde{\Psi}(\bigcap_{i=1}^{k}L_i)$.
In conclusion, $L$ is  $\reg$-dissectible.
\end{proofof}

Without the condition $\Psi(L_1)\cap \Psi(L_2)\subseteq \Psi(L_1\cap L_2)$ of Claim \ref{semilin-two-intersection}, nevertheless, it is impossible to prove the intersection of two semi-linear languages to be $\reg$-dissectible since $\semilin(2)\nsubseteq\regdissect$.

Next, we will show the $\reg$-dissectibility of the {\em Boolean hierarchy over BCFL}, where the Boolean hierarchy over $\bcfl$ is defined as follows: $\bcfl_{1} = \bcfl$, $\bcfl_{2k} = \bcfl_{2k-1}\wedge \co\bcfl$, and $\bcfl_{2k+1} = \bcfl_{2k}\vee \bcfl$ for every number  $k\in\nat^{+}$. Finally, we set $\bcfl_{\mathrm{BH}} = \bigcup_{k\geq1}\bcfl_{k}$.

\begin{theorem}\label{Boolen-dissect}
The Boolean hierarchy $\bcflbh$ is $\reg$-dissectible.
\end{theorem}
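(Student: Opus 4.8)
The plan is to collapse the entire hierarchy onto a single application of Lemma~\ref{SEMILIN-dissect} by showing that every infinite $L\in\bcflbh$ is already semi-linear. The engine for this is the operator $\tilde{\Psi}$ together with its defining singleton property: for a fixed tuple of nonempty words $\vec{w}=(w_1,\ldots,w_m)$ and $L'=L[w_1,\ldots,w_m]$, each point $v\in\nat^m$ has the unique preimage $w_1^{v_1}\cdots w_m^{v_m}\in L'$. Consequently $\tilde{\Psi}$, restricted to subsets of $L'$, is a Boolean-algebra isomorphism onto subsets of $\nat^m$: it commutes with union, with intersection (the containment already exploited in Theorem~\ref{BCFL(k)-dissect}, now sharpened to an equality), and---crucially for the present theorem---with relative complement, since $\tilde{\Psi}(L'\setminus A)=\nat^m\setminus\tilde{\Psi}(A)$.

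First I would fix an arbitrary infinite $L\in\bcflbh$, so $L\in\bcfl_k$ for some $k$. A short induction on the level $k$ following the definition of the hierarchy shows two things: that $L$ is bounded (each step $\wedge\,\co\bcfl$ passes to a subset of the previous language, and each step $\vee\,\bcfl$ takes a union of two bounded languages, which is again bounded), and that $L$ is a finite Boolean combination $\phi(A_1,\ldots,A_r)$ of bounded context-free languages $A_1,\ldots,A_r$. I would then choose one common tuple $\vec{w}$ large enough that $L\subseteq L'$ and $A_j\subseteq L'$ for every $j$, where $L'=L[w_1,\ldots,w_m]$; this is possible by concatenating a bounding tuple for $L$ (available since $L$ is bounded) with those of the finitely many $A_j$.

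Next I would rewrite $L$ as a Boolean combination taken inside $L'$. Since $L\subseteq L'$ we have $L=L\cap L'$, and intersecting the formula $\phi$ with $L'$ turns every $\Sigma^*$-complement $\overline{A_j}$ into the relative complement $L'\setminus A_j$ while leaving unions and intersections intact; a routine structural induction on $\phi$ makes this precise. Thus $L$ lies in the Boolean subalgebra of $2^{L'}$ generated by $A_1,\ldots,A_r$. Applying the isomorphism $\tilde{\Psi}$ term by term exhibits $\tilde{\Psi}(L)$ as the corresponding Boolean combination, now inside $\nat^m$, of the sets $\tilde{\Psi}(A_1),\ldots,\tilde{\Psi}(A_r)$. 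Each $\tilde{\Psi}(A_j)$ is semi-linear by the cited form of Ginsburg's theorem \cite{Gin66}, and semi-linear sets are closed under all Boolean operations \cite{GS66}; hence $\tilde{\Psi}(L)$ is semi-linear. Finally $\Psi(L)$ is the image of $\tilde{\Psi}(L)$ under the linear map $v\mapsto\sum_{j}v_j\Psi(w_j)$ and therefore semi-linear as well, so $L\in\semilin$ and Lemma~\ref{SEMILIN-dissect} yields $L\in\regdissect$.

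The main obstacle is the treatment of complements, which is exactly what separates the full Boolean hierarchy from the purely conjunctive case of Theorem~\ref{BCFL(k)-dissect}. There one needed only the one-sided containment $\bigcap_i\tilde{\Psi}(L_i)\subseteq\tilde{\Psi}(\bigcap_i L_i)$, whereas complement forces me to pin everything down inside a single bounded frame $L'$ and to check that $\tilde{\Psi}$ commutes with complement taken \emph{relative} to $L'$; this is precisely where the singleton/injectivity property of $\tilde{\Psi}$ is indispensable, and where the reduction from $\Sigma^*$-complements to $L'$-relative complements---legitimate only because $L$ itself is bounded---must be carried out with care. Once these commutation facts are established, the closure properties of semi-linear sets and Lemma~\ref{SEMILIN-dissect} finish the argument essentially for free.
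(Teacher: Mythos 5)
Your proposal is correct, but it takes a genuinely different route from the paper. The paper proves the theorem by induction on the even levels of the hierarchy: the base case $\bcfl_{2}=\bcfl-\bcfl$ is handled by a special claim (Claim~\ref{difference-SEMILIN}) that builds explicit regular dissectors out of a critical matrix of the semi-linear set $\Psi(L_1)-\Psi(L_2)$, and the inductive step uses the algebraic collapse $\bcfl_{2k}=\bcfl_{2k-2}\vee\bcfl_{2}$ (Claim~\ref{BCFL-character}) together with closure of $\regdissect$ under union. You instead prove the stronger structural fact $\bcflbh\subseteq\semilin$ and then invoke Lemma~\ref{SEMILIN-dissect} once: every level of the hierarchy consists of bounded languages, so an infinite $L\in\bcflbh$ can be placed, together with all atoms of its Boolean decomposition, inside one frame $L'=L[w_1,\ldots,w_m]$; the singleton-preimage property makes $\tilde{\Psi}$ coincide with the inverse-image map of the evaluation $v\mapsto w_1^{v_1}\cdots w_m^{v_m}$, hence commute with union, intersection, and $L'$-relative complement; Ginsburg's theorem \cite{Gin66} and the Boolean closure of semi-linear sets \cite{GS66} then give semi-linearity of $\tilde{\Psi}(L)$, and the linear map $v\mapsto\sum_j v_j\Psi(w_j)$ transfers this to $\Psi(L)$. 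Both arguments rest on the same external ingredients (Ginsburg's Theorem 5.4.2, Ginsburg--Spanier closure, and the singleton property of $\tilde{\Psi}$), but your architecture buys more: it subsumes Theorem~\ref{BCFL(k)-dissect} as well, needs no explicit dissector construction, and no hierarchy induction beyond the routine bookkeeping that hierarchy languages are bounded Boolean combinations. What the paper's route buys is reusability and generality elsewhere: Claims~\ref{BCFL-character} and \ref{coBCFL-character} are algebraic identities about the hierarchy that are reused in the proof of Theorem~\ref{BCFL-separate}, and Claim~\ref{difference-SEMILIN} is stated for arbitrary semi-linear languages under a one-sided inclusion hypothesis, not only for bounded ones. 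One cosmetic overstatement in your write-up: $\tilde{\Psi}$ restricted to subsets of $L'$ is an injective Boolean-algebra homomorphism (an isomorphism onto its image), not onto all subsets of $\nat^m$; since you only use the homomorphism identities, this does not affect the argument.
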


\begin{proof}
Since $\bcfl_{2k-1}\subseteq \bcfl_{2k}$ holds for every number $k\in\nat^{+}$, it is sufficient to prove that $\bcfl_{2k}$ is $\reg$-dissectible for all indices $k\in\nat$.  We will show this claim by induction on $k$.
For the basis case of $\bcfl_{2}$ ($= \bcfl-\bcfl$), let $L_1$ and $L_2$ be languages over alphabet $\Sigma$ in $\bcfl$ and concentrate on the difference $L_1-L_2$. First, we intend to prove Claim \ref{difference-SEMILIN}. In the claim,  the notation $\|v\|_{1}$ for any vector $v$ in a Euclidean space denotes the {\em $\ell_1$-norm} of $v$; that is, $\|v\|_{1} = \sum_{i}|v_i|$ if $v=(v_i)_{i}$.

\begin{claim}\label{difference-SEMILIN}
Let $L_1$ and $L_2$ be any two infinite semi-linear languages satisfying  $\Psi(L_1)\not\subseteq_{ae}\Psi(L_2)$. If $\Psi(L_1)-\Psi(L_2)\subseteq \Psi(L_1-L_2)$ holds, then the difference $L_1-L_2$ is $\reg$-dissectible.
\end{claim}

\begin{proof}
Since $\Psi(L_1)$ and $\Psi(L_2)$ are both semi-linear, the difference $\Psi(L_1)-\Psi(L_2)$ is   semi-linear as well \cite{GS66}. By our assumption follows the equality $\Psi(L_1)-\Psi(L_2) = \Psi(L_1-L_2)$. There exists a series of critical matrices that characterizes $\Psi(L_1-L_2)$.
Here, we want to fix one of them, say, $T=(v_j)_{1\leq j\leq m}$, where each $v_j$ is a column vector. For simplicity, we assume that $v_1\neq O$ and, moreover,  the second entry of $v_1$ is non-zero.  Given each index $i\in\{0,1\}$, let us consider a set $A_i = \{w\in\Sigma^*\mid \exists z_1\in\nat\,[\,(1,2z_1+i,0,\ldots,0)T = \Psi(w)\,]\}$. Since $\Psi(A_0\cup A_1)\subseteq \Psi(L_1-L_2)$, we conclude that $A_0\cup A_1\subseteq L_1-L_2$.
It is clear that $A_i$ is infinite and the language $C_i = \{w\in\Sigma^*\mid |w|= \| (1,2z_1+i,0,\ldots,0)T  \|_{1}\}$ is also infinite because of $A_i\subseteq C_i$. In addition, $C_i$ is regular because every string $w$ in $C_i$ satisfies $|w|=\|v_0\|_{1}+(2z_1+i)\|v_1\|_{1}$ and it is easy to determine whether or not this is true for any given string $w$ by running an appropriate dfa.  Since $C_0\cap C_1=\setempty$ and $A_i\subseteq C_i\cap (L_1-L_2)$ for each index $i\in\{0,1\}$,  $C_i$ must dissect $L_1-L_2$. Hence, $L_1-L_2$ is $\reg$-dissectible.
\end{proof}

Now, we claim that  $\tilde{\Psi}(L_1)-\tilde{\Psi}(L_2) \subseteq \tilde{\Psi}(L_1-L_2)$  for two arbitrary  languages $L_1$ and $L_2$ in $\bcfl$.  To prove this claim, take any point $v\in\tilde{\Psi}(L_1) - \tilde{\Psi}(L_2)$. Since $v\in\tilde{\Psi}(L_1)$, there exists a string $w\in L_1$ for which $v\in\tilde{\Psi}(w)$. Note that $w\not\in L_2$ because, otherwise, we obtain $v\in\tilde{\Psi}(w)\subseteq \tilde{\Psi}(L_2)$, a contradiction. Since $w\in L_1-L_2$, it follows that $v\in\tilde{\Psi}(w)\subseteq \tilde{\Psi}(L_1-L_2)$.
Using a similarity between $\Psi(w)$ and $\tilde{\Psi}(w)$ as in the proof of Theorem \ref{BCFL(k)-dissect}, we can apply Claim \ref{difference-SEMILIN} and then obtain the $\reg$-dissectibility of $L_1-L_2$.

The remaining task is to deal with the induction case of $\bcfl_{2k}$ for any number $k\geq2$. For this purpose, we will present a simple fact on the even levels of the Boolean hierarchy over $\bcfl$.

\begin{claim}\label{BCFL-character}
For every number $k\geq2$, $\bcfl_{2k} = \bcfl_{2k-2}\vee \bcfl_{2}$.
\end{claim}

\begin{proof}
Here, we want to prove that (*) for every number $k\geq2$,
$\bcfl_{2k-2}\wedge \co\bcfl = \bcfl_{2k-2}$.
Write  $\FF$ for $\bcfl_{2k-2}\wedge \co\bcfl$ for simplicity.  Since $\bcfl_{2k-2}=\bcfl_{2k-3}\wedge \co\bcfl$ holds by the definition, $\FF$ equals $\bcfl_{2k-3}\wedge (\co\bcfl\wedge \co\bcfl)$, which is actually  $\bcfl_{2k-3}\wedge \co(\bcfl\vee \bcfl)$. Since $\bcfl$ is closed under union (\ie $\bcfl\vee \bcfl = \bcfl$),  it follows that $\FF = \bcfl_{2k-3}\wedge \co\bcfl$. By the definition again, the right-hand side of this equation coincides with $\bcfl_{2k-2}$. Therefore, Statement (*) holds.

Recall that $\bcfl_{2k}$ equals $\bcfl_{2k-1}\wedge \co\bcfl$, which also coincides with  $(\bcfl_{2k-2}\vee \bcfl)\wedge \co\bcfl$. By DeMorgan's law, it holds that $\bcfl_{2k} = (\bcfl_{2k-2}\wedge \co\bcfl)\vee (\bcfl\wedge \co\bcfl)$. Statement (*) then leads to $\bcfl_{2k} = \bcfl_{2k-2}\vee \bcfl_{2}$, as requested.
\end{proof}

Notice that the induction hypothesis ensures the $\reg$-dissectibility of  $\bcfl_{2k-2}$. Since $\bcfl_{2}$ has been already proven to be  $\reg$-dissectible, $\bcfl_{2k-2}\vee \bcfl_{2}$ must be  $\reg$-dissectible by the closure property of $\regdissect$ discussed in Section \ref{sec:how-to-dissect}. By Claim \ref{BCFL-character}, this family is exactly $\bcfl_{2k}$. This completes the proof of Theorem \ref{Boolen-dissect}
\end{proof}

%%%%%%%%%%%%%%%%%%%%%%%%%
%%%%%%%%%%%%%%%%%%%%%%%%%
\section{Separation with Infinite Margins}\label{sec:application}

In this final section, we will seek a meaningful application of our previous results regarding  the
$\reg$-dissectibility of certain bounded languages.
To describe this application, we need to introduce extra terminology.
Given two infinite sets $A$ and $B$, we say that $A$ {\em covers $B$ with an infinite margin} ($A$ {\em i-covers} $B$, or $A$ is an {\em i-cover} of $B$, in short)
if  both $B\subseteq A$ and   $A\neq_{ae}B$ hold.
When $A$ i-covers $B$, we briefly write $\mathrm{i}(B,A)$ and call it an {\em i-covering pair}.
A language $C$ is said to {\em separate $\mathrm{i}(B,A)$ with infinite margins} (or {\em i-separate} $\mathrm{i}(B,A)$, in short)  if (i) $B\subseteq C \subseteq A$,
(ii) $A\neq_{ae}C$, and (iii) $B\neq_{ae}C$.
For convenience, we use the notation $\mathrm{i}(\BB,\AAA)$ for two language families $\AAA$ and $\BB$ to denote the set of all i-covering pairs $\mathrm{i}(B,A)$ satisfying  $A\in\AAA$ and  $B\in\BB$.
Another language family $\CC$ is said to {\em i-separate} $\mathrm{i}(\BB,\AAA)$ if, for every pair $\mathrm{i}(B,A)$ in $\mathrm{i}(\BB,\AAA)$, there exists a set in $\CC$ that i-separates $\mathrm{i}(B,A)$.

The following is a key lemma that bridges between the $\reg$-dissectibility and the i-separation.

\begin{lemma}\label{dissect-imply-separate}
Let $\AAA$ and $\BB$ be any two language families and assume that $\AAA-\BB$ is $\reg$-dissectible.   It then holds that, for any $A\in\AAA$ and any $B\in\BB$, if $A$ i-covers $B$, then there exists a language in $\EE$ that i-separates $\mathrm{i}(B,A)$, where $\EE$ expresses the set $\{B\cup(A\cap C)\mid A\in\AAA,B\in\BB,C\in\reg\}$. In other words, $\EE$ i-separates $\mathrm{i}(\BB,\AAA)$.
\end{lemma}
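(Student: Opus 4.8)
The plan is to apply the $\reg$-dissectibility hypothesis to the single language $A-B$ and then to convert the resulting regular dissector into the desired separator inside $\EE$. So I would begin by fixing arbitrary $A\in\AAA$, $B\in\BB$ with $A$ i-covering $B$, and unwinding the definitions. Since $A$ i-covers $B$ we have $B\subseteq A$ together with $A\neq_{ae}B$; because $B\subseteq A$ already forces $B\subseteq_{ae}A$, the condition $A\neq_{ae}B$ is equivalent to $|A-B|=\infty$. Now $A-B$ is, by the very definition of the class operation $\AAA-\BB$, a member of $\AAA-\BB$, and it is infinite. Hence the hypothesis that $\AAA-\BB$ is $\reg$-dissectible supplies a regular language $C$ that dissects $A-B$, \ie $|C\cap(A-B)|=\infty$ and $|\overline{C}\cap(A-B)|=\infty$.

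Next I would take $E=B\cup(A\cap C)$, which lies in $\EE$ by construction, and verify the three clauses of i-separation. The sandwich $B\subseteq E\subseteq A$ is immediate: $B\subseteq E$ holds trivially, and $E\subseteq A$ follows from $B\subseteq A$ and $A\cap C\subseteq A$. The core of the argument is the pair of Boolean identities
\[
E-B = C\cap(A-B), \qquad A-E = \overline{C}\cap(A-B),
\]
both of which are short computations relying on $B\subseteq A$: for the first, $(A\cap C)-B=(A-B)\cap C$; for the second, $A-E=A\cap\overline{B}\cap\overline{C}=(A-B)\cap\overline{C}$. These identities feed the two dissection properties of $C$ directly into the two margin requirements, yielding $|E-B|=\infty$ and $|A-E|=\infty$. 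Finally one observes that under $B\subseteq E\subseteq A$ the relations $B\neq_{ae}E$ and $A\neq_{ae}E$ reduce \emph{exactly} to $|E-B|=\infty$ and $|A-E|=\infty$ respectively, so $E$ i-separates $\mathrm{i}(B,A)$.

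I do not anticipate a genuine obstacle: the argument is essentially a single dictionary translation between dissection and i-separation. The only points demanding care are, first, the bookkeeping around the definition of $\neq_{ae}$ — one must confirm that the containments $B\subseteq E$ and $E\subseteq A$ make the two ``almost-everywhere inequalities'' collapse to plain infinitude of $E-B$ and $A-E$, so that no spurious finiteness condition is silently dropped — and, second, the complementation in the identity for $A-E$, where a slip would swap $\overline{C}$ for $C$ and break the matching with the two halves of the dissection of $A-B$.
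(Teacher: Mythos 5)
Your proposal is correct and follows essentially the same route as the paper's own proof: dissect the infinite set $A-B$ by a regular $C$, set $E=B\cup(A\cap C)$, and check the three i-separation clauses. You simply make explicit the Boolean identities $E-B=C\cap(A-B)$ and $A-E=\overline{C}\cap(A-B)$ and the reduction of $\neq_{ae}$ to infinitude of the differences, details the paper leaves implicit.
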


\begin{yproof}
Let $A\in\AAA$ and $B\in\BB$ be two infinite languages. Let $D=A-B$ and assume that $D$ is infinite. Our assumption guarantees the existence of a regular language $C$ for which  $C$ dissects $D$. For convenience, we set $E = B \cup (A \cap C)$.  Since $C$ dissects $D$, it follows that $|(A\cap C)-B|=\infty$ and $|(A \cap\overline{C})-B|=\infty$. These conditions imply that $B\subseteq  E\subseteq A$ and $|A-E|=|E-B|=\infty$. Thus, $E$ i-separates $\mathrm{i}(B,A)$.  Since $C$ is regular, $E$ clearly belongs to the language family $\EE$.
\end{yproof}

Concerning bounded context-free languages, we can show the following i-separation result.

\begin{theorem}\label{BCFL-separate}
For any index $k\in\nat^{+}$, $\bcfl_{k}$ i-separates $\mathrm{i}(\bcfl_{k},\bcfl_{k})$.  Thus, $\bcflbh$ i-separates $\mathrm{i}(\bcflbh,\bcflbh)$.
\end{theorem}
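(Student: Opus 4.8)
The plan is to prove the per-level claim ``$\bcfl_{k}$ i-separates $\mathrm{i}(\bcfl_{k},\bcfl_{k})$'' and then read off the $\bcflbh$ statement. For the reduction, note that the Boolean hierarchy is cumulative, $\bcfl_j\subseteq\bcfl_{j+1}$ (take the adjoined $\bcfl$-component at the top operation to be $\setempty$), so any i-covering pair $\mathrm{i}(B,A)$ with $A,B\in\bcflbh$ already lies in $\mathrm{i}(\bcfl_k,\bcfl_k)$ for $k$ the larger of the two levels; a separator in $\bcfl_k\subseteq\bcflbh$ then witnesses the second assertion. So I fix $k$ and an i-covering pair $A,B\in\bcfl_k$ with $B\subseteq A$ and $D:=A-B$ infinite.

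Following the bridge of Lemma \ref{dissect-imply-separate}, I would obtain the separator by dissecting $D$ and, crucially, arranging the dissector to \emph{contain} $B$, so that the generic separator $B\cup(A\cap C)$ collapses to the intersection form $A\cap C$, which does not leave $\bcfl_k$. Concretely, the target is a regular $R$ with $B\subseteq R$ that dissects $D$, i.e. $|R\cap D|=|\overline{R}\cap D|=\infty$. Granting such an $R$, set $C:=A\cap R$. Here $\bcfl_k$ is closed under intersection with $\reg$: this follows by induction on the defining recursion, since $\bcfl$ itself is closed under $\cap\,\reg$, the step $\wedge\,\co\bcfl$ distributes the intersection into the $\bcfl_{k-1}$-part, and the step $\vee\,\bcfl$ distributes it across the union (using $\bcfl\vee\bcfl=\bcfl$ from the proof of Claim \ref{BCFL-character}). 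Hence $C\in\bcfl_k$. Since $B\subseteq A\cap R$ gives $B\subseteq C\subseteq A$, and $\overline{R}\cap B=\setempty$ gives $A-C=\overline{R}\cap D$ and $C-B=R\cap D$, both margins are infinite and $C$ i-separates $\mathrm{i}(B,A)$.

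The heart of the argument --- and the step I expect to be hardest --- is producing the regular $R\supseteq B$ that still dissects $D$. I would attack this through the commutative image. By Ginsburg's theorem \cite{Gin66} every bounded context-free language has a semi-linear $\tilde{\Psi}$-image, and since $\tilde{\Psi}$ is essentially injective on a fixed frame $L[w_1,\dots,w_m]\supseteq A$, the operations $\vee\,\bcfl$ and $\wedge\,\co\bcfl$ act as Boolean operations on $\tilde{\Psi}$-images, which preserve semi-linearity \cite{GS66}; thus $\tilde{\Psi}(B)$ and $\tilde{\Psi}(D)$ are \emph{disjoint} semi-linear subsets of $\nat^{m}$ with $\tilde{\Psi}(D)$ infinite. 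It then suffices to find a \emph{recognizable} set $\rho\subseteq\nat^{m}$ (a finite union of products of ultimately periodic sets) with $\tilde{\Psi}(B)\subseteq\rho$ while both $\rho\cap\tilde{\Psi}(D)$ and $\tilde{\Psi}(D)\setminus\rho$ are infinite, because a recognizable exponent condition is exactly what a finite automaton scanning $w_1^{i_1}\cdots w_m^{i_m}$ can test, so it yields the desired regular $R$.

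Thus the obstacle reduces to a separation problem for semi-linear sets: since $\tilde{\Psi}(D)$ is infinite it is unbounded in some coordinate $j$, and its projection to coordinate $j$ is an infinite ultimately periodic set that a congruence $x_j\equiv c\;(\mathrm{mod}\;N)$ can split into two infinite pieces; the remaining difficulty is to do this while keeping all of $\tilde{\Psi}(B)$ inside $\rho$, which I would handle by choosing $N$ and the covered residues from the finitely many linear components of $\tilde{\Psi}(B)$ and $\tilde{\Psi}(D)$ so that $B$'s residues are absorbed but one infinite residue-fibre of $D$ is left outside $\rho$. As a fallback for components of $\tilde{\Psi}(D)$ carrying a period of support at most two, one instead extracts an infinite bounded context-free sublanguage $P\subseteq D$ with $D-P$ infinite and sets $C:=B\cup P$ for odd $k$ or $C:=A-P$ for even $k$; here $\bcfl$'s closure under union absorbs $P$ into the top layer of the definition of $\bcfl_k$, keeping $C$ at level $k$. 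Verifying that these two mechanisms together cover every i-covering pair is the point I would spend the most care on.
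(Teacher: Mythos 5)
You have correctly isolated the genuine weak point of the whole enterprise---whether the separator can be kept at level $k$---and your two supporting observations (cumulativity of the hierarchy, and closure of $\bcfl_k$ under intersection with $\reg$) are both correct. But the mechanism you build on them is unsound: a regular $R$ with $B\subseteq R$ that dissects $D=A-B$ need not exist, and your fallback fails on the very same instance, so your two mechanisms do not cover all i-covering pairs. Concretely, take $k=2$, $A=\{0^n1^n2^m\mid n,m\in\nat\}$ (a bounded context-free language) and $B = A-\{0^a1^b2^c\mid b=c\} = \{0^n1^n2^m\mid m\neq n\}$, which lies in $\bcfl_2=\bcfl\wedge\co\bcfl$; then $\mathrm{i}(B,A)\in\mathrm{i}(\bcfl_2,\bcfl_2)$ and $D=A-B=\{0^n1^n2^n\mid n\in\nat\}$. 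If $R$ is regular with $B\subseteq R$, then $S=\overline{R}$ is regular and disjoint from $B$. Suppose $0^n1^n2^n\in S$ for some $n$ at least the number of states of a dfa for $S$; among the prefixes $0^n1^n2^i$ ($0\le i\le n$) two reach the same state, say for $i<j$, and appending $2^{n-j}$ shows $0^n1^n2^{n-(j-i)}\in S$. That string lies in $B$, a contradiction; hence $\overline{R}\cap D$ is finite and \emph{no} regular superset of $B$ dissects $D$. (In your Parikh-image formulation: $\tilde{\Psi}(D)=\{(n,n,n)\}$ while $\tilde{\Psi}(B)=\{(n,n,m)\mid m\neq n\}$, and any recognizable set containing infinitely many diagonal points $(n,n,n)$ contains a product $P_1\times P_2\times P_3$ with $P_1\cap P_2\cap P_3$ infinite, hence contains some $(n,n,n')$ with $n'\neq n$; so no recognizable $\rho$ of the kind you want exists, and this obstruction is frame-independent.) Your fallback requires an infinite $P\in\bcfl$ with $P\subseteq D$ and $D-P$ infinite, but $\{0^n1^n2^n\mid n\in\nat\}$ has no infinite context-free subset whatsoever (a pumping factor $vxy$ meets at most two of the three blocks, so pumping exits the set), so no such $P$ exists either. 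The failure occurs exactly at the point you flagged as needing the most care.

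This is precisely why the paper does \emph{not} require the dissector to contain $B$. Its route is: prove by a purely Boolean induction (Claims \ref{BCFL-character} and \ref{coBCFL-character}) that $\bcfl_i-\bcfl_j\subseteq\bcflbh$ for all $i,j\ge 1$; conclude from Theorem \ref{Boolen-dissect} that $D=A-B$ is $\reg$-dissectible by some $C$ (in the example above $C=\{w\mid |w| \text{ is even}\}$ works, even though $C$ cuts into $B$ as well); and then take the separator $B\cup(A\cap C)$ of Lemma \ref{dissect-imply-separate}, where the union with $B$ repairs whatever $C$ removed from $B$. The price is exactly the one you were trying to avoid: that separator is only guaranteed to lie in $\bcfl_k\vee(\bcfl_k\wedge\reg)\subseteq\bcflbh$, not in $\bcfl_k$ itself---so your concern in fact applies to the paper's own per-level claim---but insisting on $R\supseteq B$ is not a viable repair; any salvage of your argument must fall back to the union form on pairs like the one above.
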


\begin{proof}
Hereafter, we intend to show that $\bcfl_{k}-\bcfl_{k}$ is $\reg$-dissectible because an application of Lemma \ref{dissect-imply-separate} immediately leads to the theorem. For our purpose,  it suffices to  prove that $\bcfl_{k}-\bcfl_{k}$ is included in $\bcflbh$, because $\bcflbh$ is $\reg$-dissectible by Theorem \ref{BCFL(k)-dissect}. More strongly, we  will  demonstrate that,
for any two indices $i,j\geq1$, $\bcfl_{i}-\bcfl_{j}\subseteq \bcflbh$.

Given an index pair $(i,j)\in\nat^{+}\times\nat^{+}$, let $\FF_{i,j} = \bcfl_{i}-\bcfl_{j} = \bcfl_{i}\wedge \co\bcfl_{j}$ and $\GG_{i,j} = \bcfl_{i}\wedge \bcfl_{j}$ for simplicity.
We will show that $\FF_{i,j}\subseteq \bcflbh$ by induction on $(i,j)$.
For the basis case $(1,1)$, since $\FF_{1,1} = \bcfl_{2}$ holds, clearly $\FF_{1,1}$ is a subset of $\bcflbh$.
For the second case $(2,1)$, we first note that $\bcfl_{4} = (\bcfl_{2}\wedge \co\bcfl_{2})\vee (\bcfl_{2}\wedge \bcfl_{2}) = \FF_{2,1}\vee \GG_{2,2}$. We thus obtain  $\FF_{2,1}\subseteq \bcfl_{4}$ as well as $\GG_{2,2}\subseteq \bcfl_{4}$.
For the induction case $(i,j)$, it is enough to consider the case where $i=2k$ and $j=2m+1$.
Similar to Claim \ref{BCFL-character}, we can prove the next useful relation.

\begin{claim}\label{coBCFL-character}
$\co\bcfl_{2k+1} = \bcfl_{2k-1}\vee \bcfl_{2}$.
\end{claim}

By Claims \ref{BCFL-character} and \ref{coBCFL-character}, $\FF_{2k,2m+1}$ equals $(\bcfl_{2k-2}\vee \bcfl_{2})\wedge ( \co\bcfl_{2m-1}\vee \bcfl_{2})$, which  can be
transformed into $\FF_{2k-2,2m-1}\vee \FF_{2,2m-1}\vee \GG_{2k-2,2}\vee \GG_{2,2}$.
By the induction hypothesis, there are two indices $\ell_1,\ell_2$ such that $\FF_{2k-2,2m-1}\subseteq \bcfl_{2\ell_1}$ and $\FF_{2,2m-1}\subseteq \bcfl_{2\ell_2}$.
By applying Claim \ref{BCFL-character} repeatedly, we then obtain $\bcfl_{2\ell_1} = \bigvee_{i=1}^{\ell_1}\bcfl_{2}$ and $\bcfl_{2\ell_2} = \bigvee_{i=1}^{\ell_2}\bcfl_{2}$.
Likewise, we obtain $\bcfl_{2k-2} = \bigvee_{i=1}^{k-1}\bcfl_{2}$. Hence, $\GG_{2k-2,2}$ equals $(\bigvee_{i=1}^{k-1}\bcfl_{2})\wedge \bcfl_{2} = \bigvee_{i=1}^{k-1}\GG_{2,2}$, which is included in $\bigvee_{i=1}^{k-1}\bcfl_{4} = \bcfl_{4(k-1)}$.
This fact implies the containment $\GG_{2k-2,2}\vee \GG_{2,2}\subseteq \bcfl_{4k}$. It thus  follows that $\FF_{2k,2m+1}\subseteq \bcfl_{2\ell_1}\vee \bcfl_{2\ell_2} \vee \bcfl_{4k} = \bigvee_{i=1}^{\ell_1+\ell_2+2k}\bcfl_{2}$. As discussed before, this is equivalent to $\bcfl_{2(\ell_1+\ell_2+2k)}$, which is obviously included in $\bcflbh$.
Therefore, we conclude that $\FF_{2k,2m+1}\subseteq \bcflbh$.
\end{proof}

%%%%%%%%%%%%%%%%%%%%%%%%%%%%%%
%%%%%%%%%%%%%%%%%%%%%%%%%%%%%%
\section{Future Challenges}

We have initiated a fundamental study on the dissecting power of regular languages and an  application of the $\reg$-dissectibility to the i-separation.
Throughout our initial study,  a number of open questions have arisen for future research. An important open question concerns the $\reg$-dissectibility of $\co\cfl$ and, more widely,      $\cfl_{k}$ and $\cfl(k)$, which are respectively $\cfl$-analogues of  $\bcfl_{k}$ and $\bcfl(k)$,  for every level $k\geq2$. Slightly apart from $\cfl$,  two other language families $\onecequallin$ and $\oneplin$, introduced in  \cite{TYL10}, are, at this moment, unknown to be $\reg$-dissectible. Much anticipated is a development of a coherent theory of a more general notion of $\CC$-dissectibility.
Concerning the i-separation of $\mathrm{i}(\cfl,\cfl)$, on the contrary, a key question of whether $\cfl$ i-separate $\mathrm{i}(\cfl,\cfl)$ still awaits its answer. Lately, we have learned that Bucher \cite{Buc80} had raised essentially the same question back in 1980.

%%%%%%%%%%%%%%%%%%

\bigskip\bigskip
\paragraph{Acknowledgments}
The first author is grateful to Jeffrey Shallit for drawing his attention to \cite{DSY02} whose core concept has helped formulate an initial notion of ``dissectibility'' and to Jacobo Tor\'{a}n and a reviewer for pointing to \cite{Buc80} and providing its hard copy in the last moment.

%%%%%%%%%%%%%%%%%%%%%%%%%%
%%%%%%%%%%%%%%%%%%%%%%%%%%
\bibliographystyle{alpha}

%%%%%%%%%%%%%%%%%%%%%%%%%%%%%%%%%%%%%%%%%%%%%%%%%%%
%%%%%%%%%%%%%%%%%%%%%%%%%%%%%%%%%%%%%%%%%%%%%%%%%%%
\end{document}